\title{A lower bound on the space overhead of fault-tolerant quantum computation} 
\author{Omar Fawzi}{Univ Lyon, ENS Lyon, UCBL, CNRS, Inria, LIP, F-69342, Lyon Cedex 07, France}{omar.fawzi@ens-lyon.fr}{}{European Research Council Project AlgoQIP (grant no. 851716)}
\author{Alexander M\"uller-Hermes}{Institut Camille Jordan, Universit\'{e} Claude Bernard Lyon 1, 69622 Villeurbanne cedex, France \and Department of Mathematics, University of Oslo, P.O. box 1053, Blindern, 0316 Oslo, Norway}{muellerh@math.uio.no}{}{The European Union's Horizon 2020 research and innovation programme under the Marie Sk\l odowska-Curie Action TIPTOP (grant no. 843414)}
\author{Ala Shayeghi}{Univ Lyon, ENS Lyon, UCBL, CNRS, Inria, LIP, F-69342, Lyon Cedex 07, France}{ala.shayeghi89@gmail.com}{}{European Research Council Project AlgoQIP (grant no. 851716)}
\authorrunning{O. Fawzi and A. M\"uller-Hermes and A. Shayeghi} 
\keywords{Fault-tolerant quantum computation, quantum error correction} 
\pgfplotsset{compat=1.13}
\def\beq{\begin{equation}}
\def\eeq{\end{equation}}
\def\bq{\begin{quote}}
\def\eq{\end{quote}}
\def\ben{\begin{enumerate}}
\def\een{\end{enumerate}}
\def\bit{\begin{itemize}}
\def\eit{\end{itemize}}
\def\lb{\left(}
\def\rb{\right)}
\def\r|{\right|}
\def\lbr{\left[}
\def\rbr{\right]}
\newcommand{\C}{{\mathbb C}}
\newcommand{\R}{{\mathbb R}}
\newcommand{\N}{{\mathbb N}}
\newcommand {\minusspace} {\: \! \!}
\newcommand {\fn} [2] {\ensuremath{ #1 \minusspace \br{ #2 } }}
\newcommand {\br} [1] {\ensuremath{ \left( #1 \right) }}
\newcommand {\Br} [1] {\ensuremath{ \left[ #1 \right] }}
\newcommand {\cbr} [1] {\ensuremath{ \left\lbrace #1 \right\rbrace }}
\newcommand {\norm} [1] {\ensuremath{ \left\| #1 \right\| }}
\newcommand {\normsub} [2] {\ensuremath{ \norm{#1}_{#2} }}
\newcommand {\onenorm} [1] {\normsub{#1}{1}}
\newcommand {\twonorm} [1] {\normsub{#1}{2}}
\newcommand {\bra} [1] {\ensuremath{ \left\langle #1 \right| }}
\newcommand {\ket} [1] {\ensuremath{ \left| #1 \right\rangle }}
\newcommand {\ketbratwo} [2] {\ensuremath{ \left| #1 \middle\rangle \middle\langle #2 \right| }}
\newcommand {\ketbra} [1] {\ketbratwo{#1}{#1}}
\newcommand{\proj}[2]{| #1 \rangle\!\langle #2 |}
\newcommand {\Tr} {\ensuremath{ \mathrm{Tr} }}
\newcommand {\id} {\ensuremath{\mathds{1}}}
\newcommand{\ra} {\rightarrow}
\newcommand{\rX} {\mathrm{X}}
\newcommand{\rY} {\mathrm{Y}}
\newcommand{\rZ} {\mathrm{Z}}
\newcommand{\cB} {\mathcal{B}}
\newcommand{\cC} {\mathcal{C}}
\newcommand{\cD} {\mathcal{D}}
\newcommand{\cE} {\mathcal{E}}
\newcommand{\cI} {\mathcal{I}}
\newcommand{\cN} {\mathcal{N}}
\newcommand{\cM} {\mathcal{M}}
\newcommand{\cS} {\mathcal{S}}
\newcommand{\cT} {\mathcal{T}}
\newcommand{\cL} {\mathcal{L}}
\newcommand{\sE} {\mathscr{E}}
\newcommand{\suppress}[1]{}
\newcommand{\dsep}[1]{d^1_{\mathrm{Sep}\br{#1}}\!}
\newcommand{\chisep}[1]{\chi^2_{\mathrm{Sep}\br{#1}}\!}
\begin{document}

\maketitle

\begin{abstract}
    The threshold theorem is a fundamental result in the theory of fault-tolerant quantum computation stating that arbitrarily long quantum computations can be performed with a polylogarithmic overhead provided the noise level is below a constant level. A recent work by Fawzi, Grospellier and Leverrier (FOCS 2018) building on a result by Gottesman (QIC 2013) has shown that the space overhead can be asymptotically reduced to a constant independent of the circuit provided we only consider circuits with a length bounded by a polynomial in the width. In this work, using a minimal model for quantum fault tolerance, we establish a general lower bound on the space overhead required to achieve fault tolerance. 
    
    For any non-unitary qubit channel $\cN$ and any quantum fault tolerance schemes against $\mathrm{i.i.d.}$ noise modeled by $\cN$, we prove a lower bound of $\max\cbr{\mathrm{Q}(\cN)^{-1}n,\alpha_\cN \log T}$ on the number of physical qubits, for circuits of length $T$ and width $n$. Here, $\mathrm{Q}(\cN)$ denotes the quantum capacity of $\cN$ and $\alpha_\cN>0$ is a constant only depending on the channel $\cN$. In our model, we allow for qubits to be replaced by fresh ones during the execution of the circuit and we allow classical computation to be free and perfect. This improves upon results that assumed classical computations to be also affected by noise, and that sometimes did not allow for fresh qubits to be added. Along the way, we prove an exponential upper bound on the maximal length of fault-tolerant quantum computation with amplitude damping noise resolving a conjecture by Ben-Or, Gottesman and Hassidim (2013). \footnote{An earlier version of this paper appeared in proceedings of ITCS 2022. In the current version, we have extended our results to the model with noiseless classical computation for all qubit noise channels.}
\end{abstract}

\section{Introduction}
Quantum computing is capable of solving problems for which no efficient classical algorithms are known. However, the decoherence of quantum systems due to inevitable interactions with the environment makes it challenging to build reliable large-scale quantum computers. To circumvent this obstacle the theory of quantum error correction and fault tolerance was developed~\cite{Shor95,Shor96,Stean96}.

A quantum fault tolerance scheme receives as input any ideal circuit designed for noiseless computation and outputs an encoding of the circuit which is robust against noise. The fault-tolerant threshold theorem was a major achievement in quantum computing: As long as the noise level in a quantum computer is below a certain threshold, any arbitrarily long quantum computation can be reliably performed with a reasonably low overhead, namely, a polylogarithmic factor in space and time~\cite{AB97,KLZ98,Kit97,AGP05}. 
Since then a lot of effort has been focused on introducing fault tolerance schemes with lower resource overheads, improving the lower bounds on the fault-tolerant threshold for different noise models (see, e.g., \cite{Knill05,Aliferis08,Aliferis09,Wang_2011,Yao12}). Gottesman~\cite{gottesman13constant} showed that if quantum error correcting codes with certain properties exist, then the space overhead can be brought down to a constant arbitrarily close to $1$, for computations in which the number of time steps is at most polynomial in the number of qubits. Recently, Fawzi \emph{et al.\/}~\cite{fawzi18constant} showed that these properties are indeed satisfied by quantum expander codes and the statement is true even for sub-exponential computations. But what is the fundamental limit on the space overhead for quantum fault-tolerance as a function of the noise model? This question still remains open. More broadly, the study of general properties and limitations of quantum fault tolerance schemes has received far less attention. This is in part due to the difficulty of providing a definition encompassing the different aspects of quantum fault tolerance. In this paper, we take a step towards proving general statements about quantum fault tolerance schemes. In particular, given an \emph{arbitrary} single qubit noise channel $\cN$, our main result is a lower bound on the achievable space overhead of quantum fault tolerance schemes against $\mathrm{i.i.d.}$ errors modeled by $\cN$.

Previous works on the limitations of noisy quantum circuits typically proved statements of the following form (see, e.g., \cite{Razborov,BenOr,Buhrman,Kempe}: If the noise level is too high, then any noisy circuit is ``useless'' after a limited number of time steps. These results do not apply to the low-noise regime, i.e., when the noise channel $\cN$ is not far from the identity channel, and in particular, noise levels below the fault-tolerant threshold. Therefore, they cannot be used to derive lower bounds on the overhead of fault tolerance schemes. Moreover, these statements crucially depend on several parameters, including the circuit model, the noise model, and the measure of uselessness they consider. Tweaking any of these variables can make a substantial difference. The circuit model specifies the basic components of the quantum circuits including the set of possible gates, state preparations, and measurements, as well as a model of classical computation for the processing of the measurement outcomes by the circuits. 
The noise model specifies how each component of the circuit deviates from its ideal action due the noise. 
Different noise models such as depolarizing, dephasing, or amplitude damping noise are not equivalent in the context of fault tolerance (see, e.g.,~\cite{BenOr}). Earlier fault tolerance schemes were designed for generic noise models. In practice, however, any physical implementation of a quantum computer has some additional noise structure. With the knowledge of the dominant noise in the system, one can potentially utilize schemes that are tailored for a specific noise model for cheaper suppression of the dominant noise. Indeed, biased-noise fault tolerance beyond the usual Pauli noise models has gained growing interest in recent years (see, e.g.,~\cite{AP08,SKR09,BP13,TBF18,catqubits19}). Earlier relevant results on limitations of noisy quantum computation, however, are mostly focused on a particular noise channel, typically, depolarizing noise (with the exception of Ref.~\cite{Virmani}), and it is not clear if the introduced techniques can be extended to other noise models.

With current technology, classical computation can be performed with practically perfect accuracy. Therefore, it is reasonable and conventional to assume classical computation to be error-free when evaluating the performance of quantum fault tolerance schemes. For instance, in this context, given measured syndromes of a quantum stabilizer code, computing a description of a corresponding error is assumed to be noiseless. While some earlier impossibility results establish very strong limitations for noisy circuits (see the discussion in Section~\ref{sec:related}), they do not incorporate the aid of noiseless classical computation and the adaptivity achieved through classical control. In these works, the classical systems are not distinguished from the quantum systems and they are assumed to be subject to the same noise. In fact, without this assumption, the ``uselessness'' statements established in these works fail to hold even for arbitrarily long circuits. Therefore, these results are not directly applicable to the fault tolerance schemes which rely on perfect classical computation and control. By allowing this additional resource in our model, we address a limitation of the previous results on the power of noisy quantum circuits. Our results apply to adaptive protocols composed of a hybrid of classical and quantum computation.

Before stating our results more formally, we first need to describe our quantum circuit model.

\subsection{Our model}

\begin{figure}
\centering
\subfloat[A noiseless quantum circuit. The classical subsystems are denoted by double lines.]{\includegraphics{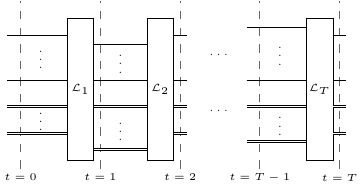}}
\qquad
\subfloat[A noisy circuit in the $\mathrm{i.i.d.}$ noise model: The classical systems are noise-free.]{\includegraphics{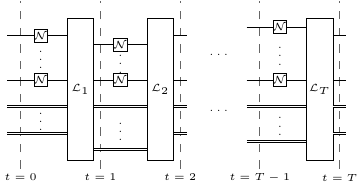}}
\caption{Illustration of the circuit model}
\label{Fig:CircuitModel}
\end{figure}

A (noiseless) quantum circuit $C$ in our model is defined by a sequence $\cbr{\cL_i}_{i\in[T]}$ of arbitrary completely positive and trace-preserving (CPTP) linear maps, acting on a potentially non-trivial input system. For any density operator $\rho$ on the input Hilbert space, the output of the circuit is given by
\begin{equation*}
    C(\rho)\coloneqq \br{\cL_T\circ \cdots \circ \cL_{2} \circ \cL_1} \br{\rho}\enspace,
\end{equation*}
where we use the same notation $C$ to denote the overall channel corresponding to the action of the circuit $C$. 
We use the term \emph{time} to refer to the vertical cuts between the layers of computation, as shown in Fig.~\ref{Fig:CircuitModel}. The $i$-th \emph{time step} of the circuit is defined by $\cL_i$ acting between $t=i-1$ and $t=i$. 

In our model, we distinguish between classical and quantum subsystems and we allow free and noiseless classical computation. Given a qubit noise channel $\cN$, for any quantum circuit $C$, the noisy implementation of $C$ with $\mathrm{i.i.d.}$ noise modeled by $\cN$ is obtained by interlacing the computational steps $\cL_i$ with layer of the noise channel $\cN$ acting independently on every qubit, while as shown in Fig.~\ref{Fig:CircuitModel}, the classical subsystems are unaffected by the noise. \suppress{Let $n_i$ denote the number of qubits at $t=i$. Then for every input state $\rho$, the output of the noisy circuit in this model is given by
\begin{equation*}
    \br{\mathrm{i.i.d._\cN}(C)}\br{\rho}= \br{\cL_T\circ \cN^{\otimes n_{T-1}}\circ\cdots \circ \cL_{2} \circ \cN^{\otimes n_1}\circ \cL_1\circ \cN^{\otimes n_0}} \br{\rho}\enspace,
\end{equation*}}
We denote by $\mathrm{i.i.d._\cN}(C)$ the overall channel corresponding to the noisy implementation of $C$. For a circuit $C$, the \emph{width} $Width(C)$ of the circuit is the maximum number of \emph{qubits} at any time step and its \emph{length} $Length(C)$ is the number of time steps of the circuit. Note that in our model, we do not count the size of the classical subsystems in $Width(C)$; in other words, the width of the circuit is determined by the number of qubit systems that are subject to the noise. Moreover, classical computations do not contribute to $Length(C)$ in this model.

\subsection{Discussion of our main results, proof techniques, and related work}

\textbf{Limitations on quantum memory times.}\suppress{We consider quantum circuits in which the gates can be arbitrary quantum channels, i.e., completely positive and trace-preserving maps, possibly controlled by a classical register of arbitrary (finite) dimension. Moreover, we allow perfect classical computation with unlimited power and perfect classical control.} We consider noisy quantum circuits in the $\mathrm{i.i.d.}_\cN$ error model, for an arbitrary non-unitary qubit channel $\cN$. Our main technical contribution is the following: 

\begin{theorem} \label{thm:no_long_memory}
    Let $\cI$ denote the qubit identity channel and $\cN$ be a non-unitary qubit channel. Then there exists a constant $p\in(0,1]$ only depending on $\cN$ such that the following holds:
    Let $C$ be an arbitrary circuit of length $T$ and width $n$ such that $T\geq \br{2/p}^{2n}$. Then for any pair of ideal encoding and decoding maps $\cE$ and $\cD$ (of appropriate input and output dimensions), we have
    \begin{equation*}
        \onenorm{\cD \circ \mathrm{i.i.d.}_\cN\br{C} \circ \cE - \cI} \geq \epsilon_0 \enspace,
    \end{equation*}
    for some universal constant $\epsilon_0\geq 1/128$. 
\end{theorem}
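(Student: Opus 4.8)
The plan is to show that after sufficiently many noisy time steps, any circuit in the $\mathrm{i.i.d.}_\cN$ model must lose essentially all information about an encoded qubit, regardless of the encoding/decoding maps. The key intuition is that a non-unitary qubit channel $\cN$ is \emph{strictly contractive} in an appropriate sense: there is some fixed-point-like behaviour or strict decrease of distinguishability under $\cN$. Concretely, I would first establish that for a non-unitary qubit channel $\cN$, there exists $p \in (0,1]$ and a qubit channel $\cR$ (a ``replacement-type'' channel, e.g.\ mapping to a fixed state or otherwise information-destroying on a nontrivial subspace) such that $\cN$ can be written as $\cN = (1-p)\cN' + p\,\cR$ for some channel $\cN'$, or more robustly, such that $\cN$ satisfies a contraction bound $\onenorm{\cN(\rho - \sigma)} \le (1 - c)\onenorm{\rho - \sigma}$ on some relevant two-dimensional family after one (or a bounded number of) applications, for a constant $c$ depending on $\cN$. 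This strict contractivity on a single qubit is the engine of the argument; the value of $p$ in the theorem statement comes from here.

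Next I would lift this single-qubit contraction to a statement about the whole $n$-qubit register evolving under $\mathrm{i.i.d.}_\cN$ interlaced with \emph{arbitrary} CPTP layers $\cL_i$ (including classically controlled ones and ones that introduce or discard qubits). The point is that an adversary controlling the $\cL_i$ can try to ``protect'' information by spreading it across many qubits, exactly as an error-correcting code does — so a naive per-qubit argument is not enough. The right tool is a potential/monotonicity argument: I would track a quantity like the trace distance (or a related divergence, e.g.\ a sandwiched R\'enyi divergence or the hypothesis-testing-type quantity) between the two branches of the computation corresponding to two distinct encoded inputs. The intermediate CPTP layers $\cL_i$ are arbitrary but data-processing-monotone, so they can only fail to increase this quantity; the issue is that a single layer of $\cN^{\otimes n}$ only contracts by a factor like $(1-c/\text{poly}(n))$ if the information is cleverly encoded, which is why the bound $T \ge (2/p)^{2n}$ has exponential dependence on $n$. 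I expect the cleanest route is: show that over a ``block'' of $\Theta(1)$ time steps the distinguishability is multiplied by at most $(1 - p^{2n}/2)$ or similar — i.e.\ worst-case contraction per block is only exponentially small in $n$ but strictly bounded below $1$ — and then iterate $T$ times to drive the distinguishability below, say, $\epsilon_0$. The count $(2/p)^{2n}$ is exactly what makes $(1-p^{2n})^{T}$ small.

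The main obstacle is making the per-block contraction bound rigorous and uniform over all circuits, including those that add fresh (noiseless-at-birth but then noisy) qubits and use unbounded-dimension classical control. I would handle the classical control by noting that a classically controlled channel is a convex mixture (over measurement outcomes) of fixed channels, and trace distance is jointly convex, so it suffices to bound the contribution of each branch; the classical register itself carries no ``quantum'' distinguishability that the noise can't already be accounted for, and crucially the decoder $\cD$ is applied at the end so intermediate classical information doesn't help preserve the encoded \emph{quantum} state beyond what the data-processing inequality already allows. For fresh qubits, the key observation is that a freshly added qubit is in a fixed state independent of the encoded input, so it carries no distinguishing information and, if anything, only adds more sites for $\cN$ to act on. Once the uniform per-block bound $\onenorm{(\text{stuff over one block})(\rho_0 - \rho_1)} \le (1 - p^{2n}) \onenorm{\rho_0 - \rho_1}$ (or a variant) is in hand, the final step is routine: choose $\rho_0, \rho_1$ to be the images under $\cE$ of two states at trace distance $2$ (e.g.\ $\ket{0}$ and $\ket{1}$), iterate the contraction $\lfloor T / O(1)\rfloor \ge$ (roughly) $(2/p)^{2n}$ times to get the distinguishability of the outputs of $\mathrm{i.i.d.}_\cN(C)$ below a small constant, and conclude by the triangle inequality that $\cD\circ\mathrm{i.i.d.}_\cN(C)\circ\cE$ cannot be $\epsilon_0$-close to $\cI$, since $\cI$ preserves the trace distance $2$ between those two inputs; optimizing the constants gives $\epsilon_0 \ge 1/128$.
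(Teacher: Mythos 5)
Your overall template (find a potential function that each noisy layer contracts by a factor $1-p^{\Theta(n)}$, iterate $T \geq (2/p)^{2n}$ times, conclude via the triangle inequality) matches the paper's, but the potential function you chose breaks the argument in exactly the regime this theorem is about. You propose to track the trace distance (or a R\'enyi-type divergence) between the two branches of the computation corresponding to two distinct encoded inputs such as $\ket{0}$ and $\ket{1}$. In this model classical registers are noiseless and classical computation is free, so the circuit can simply measure the encoded qubit in the first time step, store the outcome in the classical register for $T$ steps, and re-prepare a state at the end: the trace distance between the two branches for the inputs $\ket{0}$ and $\ket{1}$ is then preserved exactly for arbitrary $T$, and no per-block contraction bound of the form you want can hold. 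Your proposed fix --- that ``the classical register itself carries no quantum distinguishability'' and that intermediate classical information ``doesn't help'' --- is precisely where the argument fails: the classical register carries all of the distinguishability between those two inputs, and the decoder is allowed to read it. A second instance of the same problem is dephasing-type noise: for a non-unitary unital channel such as $\rho \mapsto (1-q)\rho + q\rZ\rho\rZ$ the trace-norm contraction coefficient equals $1$ (orthogonal computational-basis states remain perfectly distinguishable), so there is no constant $c$ with $\onenorm{\cN(\rho-\sigma)} \leq (1-c)\onenorm{\rho-\sigma}$ even on a single qubit, and the single-qubit ``engine'' you rely on does not exist for such channels.

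The missing idea is to replace distinguishability of two inputs by a genuinely quantum quantity that cannot be cached in a noiseless classical register. The paper runs \emph{two parallel copies} of the circuit on a maximally entangled input and tracks the $\chi^2$-divergence from the set of states separable across the two copies; this quantity is monotone under the (separable) layers, is not increased by classical side information, and is destroyed outright by the entanglement-breaking component that every non-unitary unital qubit channel contains (Lemma~\ref{lem:UnitalCaseLemma}), which yields the $(1-p^{2n})$ contraction in the unital case. For non-unital channels the paper instead bounds $\eta_{\chi}(\cN^{\otimes n})$ via a fidelity/Stinespring estimate on $\eta_{\mathrm{tr}}$ in terms of $\lambda_{\min}\br{C_{\cN^\dagger\circ\cN}}$, which tensorizes --- another point your sketch glosses over, since strict contractivity of $\cN$ does not by itself give strict contractivity of $\cN^{\otimes n}$. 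The endgame (from a small $\chi^2$-distance to separable states, conclude that the composed channel is $1/4$-far from $\cI^{\otimes 2}$ and hence $1/128$-far from $\cI$) then goes through, but only because the potential is an entanglement measure rather than a two-point distinguishability.
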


The constant $p=p(\cN)$ in Theorem \ref{thm:no_long_memory} is given by $p(\cN)=\max(p_1(\cN),p_2(\cN))$ where
\[
p_1(\cN) = \max\left\lbrace q\in \left[ 0,1\right]:\cN\geq q\cM ,\text{ for an entanglement breaking qubit channel }\cM \right\rbrace.
\]
and
\begin{align*}
p_2(\mathcal{N}) = \max\left\lbrace q^2\lambda_{\min}\left(C_{\mathcal{M}^\dagger \circ \mathcal{M}}\right)/16: \right.&q\in \left[ 0,1\right],~\mathcal{N}\geq q\mathcal{M}, \\
&\left.\text{for a non-unital extremal qubit channel }\mathcal{M}\right\rbrace.
\end{align*}
Here, we write $\cN\geq \cM$ for linear maps $\cN$ and $\cM$ if the difference $\cN-\cM$ is completely positive. We will show that $p_1(\cN)>0$ for any unital qubit channel $\cN$ that is not unitary, and that $p_2(\cN)>0$ for every non-unital qubit channel $\cN$. 

We point out that the statement of Theorem \ref{thm:no_long_memory} does not hold as is for quantum circuits operating on qudits rather than qubits and affected by any non-unitary noise channel. By taking a direct sum of the ideal quantum channel and a completely depolarizing channel it is easy to construct non-unitary noise channels for which quantum information can be stored indefinitely in a decoherence-free subspace. To obtain an upper bound on the storage time in these cases, further assumptions are needed.

Theorem~\ref{thm:no_long_memory} states that for any noise channel $\cN$, no noisy quantum circuit in the $\mathrm{i.i.d.}_\cN$ error model is capable of preserving even a single (unknown) qubit for more than an exponential number of time steps in its width. We would like to emphasize two features of our result:
\begin{enumerate}
\item The quantum channels making up the circuit do not need to be unital. Therefore, they can remove all the entropy produced by the noise channel along the computation, for example by introducing ancilla qubits.
\item Our results also hold for circuits with free and noiseless classical computation. Even after long computation times, it is then possible to prepare orthogonal states conditioned on earlier measurement outcomes that have been kept in the perfect classical memory.
\end{enumerate}
In order to prove the result, we consider two parallel instances of the circuit $C$ and prove that for any input state the distance from the set of separable states with respect to this bipartition is contracted by a factor of $1-p^{2n}$ (with $p>0$ as above) in each time step. This implies that the circuit $C$ itself cannot preserve every quantum state for more than $\br{2/p}^{2n}$ time steps.

Ben-Or \emph{et al.\/} \cite{BenOr} studied the limitations of fault-tolerant quantum computation under the independent noise model $\mathrm{i.i.d.}_\cN$, for different classes of non-unitary qubit channels. Applying such a channel many times either converges to a limit channel mapping every quantum state to a point in the Bloch sphere (e.g., for the depolarizing channel), or it converges to a limit quantum channel mapping every quantum state to a diameter of the Bloch sphere (e.g., for the dephasing channel). By slightly generalizing an earlier entropic argument\footnote{This argument was stated in \cite{ABIN96}, but the proof of a key lemma is not correct. This was corrected in~\cite{muller2016relative}.}, Ben-Or \emph{et al.\/} prove that if the limit channel maps every quantum state to the maximally mixed state, then it is impossible to compute for more than $\mathrm{O}(\log n)$ steps and they show how to achieve this bound up to a polylogarithmic factor in $\log n$. Entropic arguments of this kind are crucially based on the assumption that entropy cannot be removed during the execution of the circuit. In particular, these arguments do not apply for our circuit model (including common settings of fault-tolerance~\cite{AB97,AGP05}) where the entropy produced by the noise channel can be removed along the computation (cf., feature 1 from above). When the limiting quantum channel maps every quantum state to a diameter of the Bloch sphere (e.g., the dephasing channel), Ben-Or \emph{et al.\/} prove that it is impossible to store quantum information for more than a polynomial number of time steps, however again assuming no fresh ancilla qubits and no noiseless classical computation. Using standard threshold theorems, they also show that for this class of channels, computations involving $\mathrm{O}(n^a)$ qubits for $\mathrm{O}(n^b)$ time steps are possible, provided $a+b<1$. Finally, if the limit channel has a fixed point other than the center of the Bloch sphere (e.g., the amplitude damping channel) then exponentially long quantum computations are possible. This is achieved by using the noise itself as a refrigerator to cool down the qubits and produce fresh ancillas. They also conjecture that for this class of channels it is impossible to compute for more than exponential time with no fresh ancilla qubits. To our knowledge, prior to this work, no proof of this statement existed. Theorem~\ref{thm:no_long_memory} proves this conjecture to be true. In addition, our result applies even if fresh ancillas are used and the upper bound on the length becomes $e^{\mathrm{O}(n+m)}$ with $m$ ancilla qubits in each time step.

In a related work, Raginsky~\cite{Raginsky} considers the noise model where a fixed noise channel $\cT$ acts collectively on all the qubits in the quantum circuit after each computational step. He shows that, for any quantum circuit involving only unitary operations, if the channel $\cT$ modeling the noise is strictly contractive then the distance of possible output states of the circuit is exponentially small in the length of the circuit. Moreover, he shows that the set of strictly contractive channels is dense in the set of all quantum channels. Using this fact and under the assumption that quantum operations can be performed only with a finite precision, he argues that strictly contractive channels serve as a natural abstract model for noise in any physically realizable quantum computer. Our work can be seen as an extension of this argument. In particular, we show how to apply a similar reasoning when the channel $\cT$ modeling the noise is given by tensor powers of an arbitrary non-unitary qubit channel, and we further extend the argument by taking into account free and perfect classical computation (obtaining feature 2 from above). Note that in general strict contractivity of a quantum channel does not necessarily imply the same property for its tensor powers (see the discussion at the end of Section~\ref{sec:exp bnd} for more details).  

\textbf{Limitations on quantum fault-tolerance.} Inspired by the observation of Theorem~\ref{thm:no_long_memory}, in Section~\ref{sec:implications}, we provide a natural high-level description of quantum fault tolerance schemes with minimal assumptions. A fault tolerance scheme is a circuit encoding associated with a family of quantum error correcting codes. Intuitively, we only require a fault tolerance scheme to be capable of implementing any arbitrary logical circuit on \emph{encoded data} in a code from the family, up to arbitrary accuracy in the presence of noise. Here, we assume that the encoding and decoding maps for the quantum error correcting code can be implemented perfectly. We quantify the accuracy in terms of the induced trace distance between the channels corresponding to the ideal circuit and its fault-tolerant implementation with noise. To our knowledge, this requirement is satisfied by most, if not all, known fault tolerance schemes. 

In this work we do not focus on investigating the limitations of \emph{computation} using noisy quantum circuits. Indeed, our results are shown assuming perfect classical computation with \emph{unlimited} power is possible. We are rather interested in proving limitations of fault tolerance schemes as procedures which, \emph{oblivious} to the overall computation being performed, allow the implementation of any ideal input circuit using noisy components.

Our high-level definition of fault tolerance enables us to recast the problem of proving a lower bound on the space overhead for fault-tolerant quantum computation into a more tractable problem in the better understood framework of quantum Shannon theory. Note that a fault tolerance scheme should naturally be able to implement a quantum memory. From this alternative perspective, a quantum memory for $T$ time steps in the $\mathrm{i.i.d.}_\cN$ noise model corresponds to one-way point-to-point quantum communication with $T-1$ intermediate relay points on a line, linked by $T$ instances of the noisy channel $\cN$, where the relay points are allowed to perform arbitrary (noiseless) quantum operations and communicate classically with the following relay points on the line. Similar communication models have been studied earlier in the quantum Shannon theory literature (see, e.g.,~\cite{Pirandola19}). Here, given a qubit channel $\cN$, we are interested in the optimal rate of communication over the links for reliably sending $n$ qubits as a function of $T$ and $n$. The case $T=1$ is well-studied in the quantum Shannon theory literature and the optimal rate is given by the quantum capacity of $\cN$. Theorem \ref{thm:no_long_memory} implies an upper bound on the achievable rates (or, equivalently, a lower bound on the necessary overhead) as a function of $T$. The following is our main result:

\begin{theorem}\label{thm:overhead}
    Let $\cN$ be a non-unitary qubit channel. For any fault tolerance scheme against the $\mathrm{i.i.d.}_\cN$ noise model, the number of physical qubits is at least 
    \begin{equation*}
        \max\cbr{\mathrm{Q}(\cN)^{-1}n,\alpha_\cN \log T}\enspace,
    \end{equation*}
    for circuits of length $T$ and width $n$, and any constant accuracy $\epsilon\leq 1/128$. Here, $\mathrm{Q}(\cN)$ denotes the quantum capacity of $\cN$ and $\alpha_\cN=\frac{1}{2\log 2/p}$, where $p\in(0,1]$ is a constant only depending on the channel $\cN$. In particular, fault tolerance is not possible even for a single time step if $\cN$ has zero quantum capacity or the space overhead is strictly less than $\mathrm{Q}(\cN)^{-1}$. 
\end{theorem}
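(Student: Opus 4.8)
The plan is to obtain the two terms of the lower bound separately, in each case by specializing an arbitrary fault tolerance scheme to the task of implementing a \emph{quantum memory} and then invoking Theorem~\ref{thm:no_long_memory} for one term and the converse to the quantum capacity theorem for the other. Fix a fault tolerance scheme against the $\mathrm{i.i.d.}_\cN$ noise model and feed it the length-$T$, width-$n$ identity circuit $C=\cI^{\otimes n}$. By the high-level definition in Section~\ref{sec:implications}, this produces a concrete noisy circuit $\tilde C$ on some number $N$ of physical qubits, together with \emph{ideal} (noiseless) encoding and decoding maps $\cE$ and $\cD$ for the underlying code, with $\onenorm{\cD\circ\mathrm{i.i.d.}_\cN(\tilde C)\circ\cE-\cI^{\otimes n}}\leq\epsilon$; by monotonicity of the trace distance under CPTP maps (tracing out all but one logical qubit) the same bound holds with $\cI^{\otimes n}$ replaced by the single-qubit identity $\cI$. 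Crucially, $\tilde C$ has width exactly $N$ --- classical registers are not counted, consistently both with our model and with the hypotheses of Theorem~\ref{thm:no_long_memory} --- and length at least $T$.

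\emph{The $\alpha_\cN\log T$ term.} Assume for contradiction that $N<\alpha_\cN\log T=\log T/(2\log(2/p))$, equivalently $T>(2/p)^{2N}$. Then $\tilde C$ has width $N$ and length $\geq T\geq(2/p)^{2N}$, so Theorem~\ref{thm:no_long_memory}, applied with its width parameter set to $N$, gives $\onenorm{\cD\circ\mathrm{i.i.d.}_\cN(\tilde C)\circ\cE-\cI}\geq\epsilon_0\geq 1/128$. Since a fault tolerance scheme can be asked to achieve any prescribed accuracy, in particular some $\epsilon<1/128$, this is a contradiction, and hence $N\geq\alpha_\cN\log T$.

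\emph{The $\mathrm{Q}(\cN)^{-1}n$ term.} In the $\mathrm{i.i.d.}_\cN$ model, $\tilde C$ applies a noise layer $\cN^{\otimes N}$ immediately after the ideal encoder $\cE$, and $\cN^{\otimes N}$ is literally $N$ parallel uses of $\cN$. Absorbing everything that follows this first noise layer into the decoder, the triple $(\cE,\ \cN^{\otimes N},\ \text{decoder})$ is a genuine length-$N$ block code for $\cN$ transmitting $n$ qubits with trace-distance error at most $\epsilon<1$. Letting $n$ range over the code family underlying the scheme --- which has a fixed overhead and whose error can be driven to $0$ --- this yields a sequence of codes of blocklength $N=N(n)$ and rate $n/N(n)$ with vanishing error, and the converse to the quantum capacity theorem (together with the Fuchs--van de Graaf inequalities to pass between trace distance and entanglement fidelity) forces $\limsup_n n/N(n)\leq\mathrm{Q}(\cN)$, i.e.\ $N\geq\mathrm{Q}(\cN)^{-1}n$. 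Taking the maximum of the two estimates gives the claim, and the final remarks follow at once: if $\mathrm{Q}(\cN)=0$ the first term is $+\infty$, so no finite $N$ works even for $T=1$, and a scheme with overhead $N/n<\mathrm{Q}(\cN)^{-1}$ would violate the first bound.

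\emph{Where the difficulty lies.} The combination step is pure bookkeeping; the work is in making the two reductions watertight. For the first term this comes down to verifying that ``width'' in our model is exactly the number of noisy physical qubits, so that Theorem~\ref{thm:no_long_memory} applies verbatim with $N$ in place of $n$ and the classical control is handled for free. For the second term the delicate point is purely Shannon-theoretic: since the coherent information is only superadditive, the single-letter formula is not attained at finite blocklength, so one must phrase the converse in terms of the regularized quantity --- which equals $\mathrm{Q}(\cN)$ --- and with a vanishing (rather than merely constant) error, which is legitimate here because a fault tolerance scheme supplies codes of growing blocklength whose error tends to $0$. All the genuinely new content is contained in Theorem~\ref{thm:no_long_memory}, which we take as given.
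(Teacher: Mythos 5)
Your proposal follows essentially the same route as the paper: feed the width-$n$, length-$T$ identity circuit to the scheme, obtain the $\alpha_\cN\log T$ term by contraposing Theorem~\ref{thm:no_long_memory} (using $T'\geq T$ and the fact that ``width'' counts exactly the noisy qubits), and obtain the $\mathrm{Q}(\cN)^{-1}n$ term by isolating a single noise layer $\cN^{\otimes N}$ and reading off an $\epsilon$-achievable rate for arbitrarily small constant $\epsilon$. The only structural difference is immaterial: the paper peels off the \emph{last} noise layer and absorbs everything before it into the encoder, whereas you peel off the first layer and absorb everything after it into the decoder. Your worry about superadditivity is also moot, since $\mathrm{Q}(\cN)$ is defined operationally in Definition~\ref{def:Qcapacity} as a supremum of achievable rates, so the converse you need is essentially definitional.

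There is, however, one step you gloss over that the paper handles explicitly and that cannot be omitted in this model. The noise layer acts only on the quantum registers; any classical registers present at that time step pass from your ``encoder'' to your ``decoder'' completely untouched. Consequently the triple $(\cE,\ \cN^{\otimes N},\ \text{decoder})$ you construct is not a bare block code for $\cN^{\otimes N}$ but a code for $\cN^{\otimes N}$ \emph{assisted by free, noiseless forward classical communication}. To conclude $n/N\leq\mathrm{Q}(\cN)$ you must additionally invoke the fact that forward classical communication does not increase the quantum capacity, which is exactly the sentence the paper devotes to this point. The fact is standard, but since the entire premise of the model is that classical computation and classical memory are free and perfect, asserting that you have ``a genuine length-$N$ block code for $\cN$'' without this remark leaves a real gap.
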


Theorem~\ref{thm:overhead} in particular implies that fault-tolerant quantum computation with constant space overhead is not achievable beyond exponential computations for any noise channel $\cN$. 

For the $p$-depolarizing noise given by $\cN(\rho) = (1-p) \rho + p \frac{\id}{2}$, the quantum capacity is known to be zero for $p > \frac{1}{3}$~\cite{BDE98}. Thus, for $p > \frac{1}{3}$, the lower bound becomes $+\infty$, i.e., fault tolerance is not possible in our model. This gives stronger limitations compared to previous results~\cite{Razborov,Kempe,Buhrman} (except for a result in~\cite{Kempe} where making an assumption on the gate set, namely CNOT is the only allowed two-qubit gate, they show fault-tolerance is not possible when $p \geq 29.3\%$). We stress that all these results are incomparable as the models are different. For $p\leq 1/3$, we cannot rule out fault tolerance, but we can show a lower bound on the overhead as a function of the circuit depth.

\subsection{Related work on limitations of quantum circuits under strong noise} \label{sec:related}

 The results discussed below are not directly relevant to our work in the sense that they do not apply to noise levels below the fault-tolerant threshold. However, they establish impossibility results for noisy circuits and prove upper bounds on the threshold for fault-tolerant computation.

Razborov \cite{Razborov} considers $p$-depolarizing $\mathrm{i.i.d.}$ noise model and circuits with quantum gates which are allowed to be arbitrary quantum channels. He proves that if $p>1-1/k$ then for any circuit in this model of width $n$ and length $T$ with gates of fan-in at most $k$, the distance between the states obtained by applying the circuit on any pair of input states decreases as $n2^{-\Omega(T)}$. In particular, any computation of length $T=\Omega(\log n)$ is essentially useless since the output is independent of the initial state. Such a strong notion of uselessness and upper bound on $T$ as a function of $n$ is possible in this case because of the strong assumptions: the noise is sufficiently high and classical information is subject to noise. 
If classical information can be stored perfectly, one can imagine a circuit in which part of the input is measured in the first step, stored, and then output at the end (as in the second strategy given above): for arbitrarily large values of $T$ this circuit is not useless. Moreover, in our circuit model, there is no restriction on the gate set of the circuit, in particular the fan-in of the gates can be arbitrary. Our bound applies to the $\mathrm{i.i.d.}$ noise model for any non-unitary qubit noise channel. Finally, we point out that for the model considered by Razborov, our techniques can be adapted to recover similar bounds for a large class of quantum channels. 

Using a different approach, Kempe \emph{et al.\/} \cite{Kempe} improve the threshold upper bound to $1-\Theta(1/\sqrt{k})$. However, their result is not quite comparable with Razborov's earlier result since they are obtained in different models. Kempe \emph{et al.\/} consider circuits composed of arbitrary one qubit CPTP gates which are assumed to be essentially noiseless and $k$-qubit gates that are probabilistic mixtures of unitary operations preceded by independent $p$-depolarizing noise on their input qubits. They also assume that the output is the outcome of a measurement of a designated qubit (or a constant number of qubits) in the computational basis. They prove that if $p>1-\Theta(1/\sqrt{k})$ then for any circuit of length $T$ in this model, the total variation distance between the output distributions for any pair of input states decreases as $2^{-\Omega(T)}$. Similar to Razborov's, their model does not allow noiseless classical computation and classically controlled operations. Note that a similar bound is known to hold for noisy classical circuits as well \cite{Evans1,Evans2}.

When classical computation is assumed to be free and perfect, a measure of ``uselessness'' for noisy quantum circuits should naturally capture quantum phenomena that are not observed in classical circuits. One such measure that has been studied in earlier works \cite{Virmani} is being efficiently simulable classically. Along this direction, Buhrman \emph{et al.\/} \cite{Buhrman} show that quantum circuits consisting of noiseless stabilizer operations and arbitrary single qubit unitary gates followed by $45.3\%$ depolarizing noise can be efficiently simulated by classical circuits. Their result is incomparable to ours for the same reasons above.

\section{Preliminaries} \label{sec:prelim}

Let $\cM_d$ denote the set of $d\times d$ matrices with complex entries. Consider a quantum system described by the $d$-dimensional Hilbert space $\C^d$. Quantum states of the system are described by the set of \emph{density operators} given by $\cbr{\rho\in\cM_d:\rho\geq0,\Tr(\rho)=1}$. Pure state density operators are rank-one projectors $\rho=\ketbra{\psi}$. A density operator $\zeta$ on $\C^{d_A}\otimes\C^{d_B}$ is called \emph{separable} if it can be written as $\zeta = \sum_{i\in I} p_i\, \rho_i \otimes \sigma_i \enspace$ for some finite set $I$, a probability distribution $p:I\rightarrow [0,1]$, and density operators $\{\rho_i:i\in I\}$ on $\C^{d_A}$ and $\{\sigma_i:i\in I\}$ on $\C^{d_B}$. We denote by $\mathrm{Sep(A:B)}$ the set of all separable state on $\C^{d_A}\otimes\C^{d_B}$ with respect to the bi-partition $A:B$, where the two subsystems are labeled $A$ and $B$. A quantum state is called \emph{entangled} if it is not separable.

\textbf{Representation of Quantum channels.} A \emph{quantum channel} is a completely positive and trace-preserving linear map $\cT:\cM_{d_A}\rightarrow\cM_{d_B}$. We denote by $\cI_d: \cM_d \rightarrow \cM_d$ the identity channel, i.e., the identity map on $\cM_d$. Next, we review some basics about different representations of quantum channels (for more details see~\cite{TQI_Watrous}). Let $\cT:\cM_{d_A}\rightarrow\cM_{d_B}$ be a quantum channel. Let
\begin{equation}\label{eq:def_ChoiMatrix}
    C_\cT=\textstyle\sum_{i,j\in[d]} \cT\br{\ketbratwo{i}{j}}\otimes \ketbratwo{i}{j} \enspace,
\end{equation}
where $\cbr{\ket{i}:i\in [d_A]}$ is the standard basis of $\C^{d_A}$. The operator $C_\cT$ is called the \emph{Choi matrix} of $\cT$. The rank of $C_\cT$ is called the Choi rank of $\cT$.
For any quantum channel $\cT$, there exists a finite set $I$ and linear operators $K_i:\C^{d_A}\rightarrow \C^{d_B}$ for any $i\in I$ such that $\sum_{i\in I} K_i^\dagger K_i=\id_{d_A}$ and $\cT(X)=\sum_{i\in I} K_i X K_i^\dagger$, for all $X \in \cM_{d_A}$. Any representation of this form is called a \emph{Kraus representation} of the channel $\cT$ and the operators $\cbr{K_i}_{i\in I}$ are called the corresponding \emph{Kraus operators}. Note that Kraus representation in general is not unique, and the minimum number of Kraus operators in any Kraus representation of $\cT$ is equal to the Choi rank of $\cT$~\cite[Corollary 2.23]{TQI_Watrous}. Important classes of quantum channels can be defined via their Kraus representation: A quantum channel is called \emph{unitary} if it admits a Kraus decomposition with a single unitary Kraus operator, and it is called \emph{entanglement breaking} if it admits a Kraus representation with Kraus operators $K_i$ of rank $1$. Consider quantum systems labelled $A_1,A_2,B_1$ and $B_2$. A bipartite quantum channel $\cT:\cM_{d_{A_1}}\otimes \cM_{d_{B_1}}\ra \cM_{d_{A_2}}\otimes \cM_{d_{B_2}}$ is called a \emph{separable quantum channel} with respect to the bipartition $(A_1,A_2):(B_1,B_2)$ if it admits a Kraus representation with Kraus operators of the form $K_i = K^{A}_i\otimes K^{B}_i$ with operators $K^{A}_i:\C^{d_{A_1}}\ra \C^{d_{A_2}}$ and $K^{B}_i:\C^{d_{B_1}}\ra \C^{d_{B_2}}$. We denote the set of these quantum channels by $\mathrm{SepC}\br{(A_1,A_2):(B_1,B_2)}$. The set of quantum channels from $\cM_{d_A}$ to $\cM_{d_B}$ is compact and convex, and its extreme points are given by the quantum channels $\cT:\cM_{d_A}\rightarrow\cM_{d_B}$ having a Kraus representation $\cT\br{X}=\sum_i K_i X K_i^{\dagger}$ such that $\cbr{K_i^\dagger K_j}_{i,j}$ is a linearly independent set~\cite{Choi}. For any quantum channel $\cT:\cM_{d_A}\rightarrow\cM_{d_B}$, there exists an isometry $V:\C^{d_A} \rightarrow \C^{d_B}\otimes \C^{d_E}$, for some $d_E\in \N$, such that $\cT(X)=\Tr_E\br{VXV^\dagger}$, for every $X\in\C^{d_A}$. Any such representation of $\cT$ is called a \emph{Stinespring representation} and the isometry $V$ is referred to as a \emph{Stinespring dilation} of $\cT$. The dimension $d_E$ can be chosen to be equal to the Choi rank of $\cT$~\cite[Corollary 2.27]{TQI_Watrous}. 

\textbf{Trace-distance, quantum $\chi^2$-divergence, and contraction coefficients.} The \emph{induced trace norm} of a linear map $\cT:\cM_{d} \rightarrow \cM_{d'}$ is defined as 
\begin{equation*}
    \onenorm{\cT} \coloneqq \max \cbr{\onenorm{\cT(X)}\,:\, X\in \cM_d,\, \onenorm{X}\leq 1} \enspace.
\end{equation*} 
We will need the \emph{trace-norm contraction coefficient} of a quantum channel $\cT:\cM_d\rightarrow\cM_d$ given by 
\begin{equation}\label{equ:QDobrushin}
    \eta_{\text{tr}}\br{\cT} := \sup_{\rho,\sigma} \frac{\onenorm{\cT(\rho)-\cT(\sigma)}}{\onenorm{\rho-\sigma}
    } \enspace,    
\end{equation}
where the supremum is over all pairs of quantum states. This quantity can be regarded as a quantum analogue of the Dobrushin ergodicity coefficient \cite{Dobrushin}, which is an important tool in the study of Markov processes. We will often use that 
\begin{equation}\label{equ:TNContract}
    \eta_{\text{tr}}\br{\cT} = \frac{1}{2}\max_{\psi\perp\phi}\onenorm{ \cT\br{\proj{\psi}{\psi}-\proj{\phi}{\phi}}} \enspace,
\end{equation}
where the maximum is over orthogonal pairs of pure states (see~\cite{ruskai1994beyond}). In particular, $\eta_{\text{tr}}(\cT)=1$ if and only if, there exists a pure state $\psi\in\C^d$ for which $(\cT^\dagger \circ \cT) \br{\proj{\psi}{\psi}}$ is not full-rank.

For a pair of density operators $\rho$ and $\sigma$, the $\chi^2$-divergence is defined by 
\begin{align*}
\chi^2(\rho, \sigma) &\coloneqq \Tr\left((\rho - \sigma) \sigma^{-1/2} (\rho - \sigma) \sigma^{-1/2}) \right) \ ,
\end{align*}
when $\mathrm{supp}(\rho) \subseteq \mathrm{supp}(\sigma)$ and $+\infty$, otherwise. The operator $\sigma^{-1/2}$ is to be understood as generalized inverse of $\sigma^{1/2}$. Note that we have $\chi^2(\rho, \sigma) = \Tr(\rho \sigma^{-1/2} \rho \sigma^{-1/2}) - 1$. 

$\chi^2$-divergence is monotone under quantum channels and jointly convex, see e.g.,~\cite[Proposition 7]{TKRWV10}. Moreover, we have 
\begin{equation} \label{eq:tr_dist vs chi2}
    \onenorm{\rho-\sigma}^2\leq \chi^2(\rho,\sigma)\enspace,
\end{equation}
for any pair of density operators $\rho$ and $\sigma$~\cite[Lemma 5]{TKRWV10}. We will use the \emph{$\chi^2$ contraction coefficient} of a quantum channel $\cT:\cM_d\rightarrow\cM_d$ given by
\begin{align}
\eta_{\chi}\br{\cT} &\coloneqq \sup_{\rho, \sigma} \frac{\chi^2(\cT(\rho), \cT(\sigma))}{\chi^2(\rho, \sigma)}\enspace,
\end{align}
where the supremum is over all pairs of quantum states. For any quantum channel $\cT:\cM_d\rightarrow\cM_d$, the following inequality holds~\cite[Theorem 14]{TKRWV10}:
\begin{equation} \label{eq:eta_chi vs eta_tr}
    \eta_{\chi}\br{\cT} \leq \eta_{\mathrm{tr}}\br{\cT}\enspace.
\end{equation}

\textbf{Quantum capacity.} The quantum capacity of a quantum channel is the maximum rate at which it can reliably transmit quantum information over asymptotically many uses of the channel.

\begin{definition} \label{def:Qcapacity}
    For a quantum channel $\cT:\cM_{d_A}\rightarrow\cM_{d_B}$, a communication rate $R$ is called $\epsilon$-\emph{achievable} if there exists $n_\epsilon$ such that for all $n\geq n_\epsilon$, there is an encoding quantum channel $\cE_n:\cM^{\otimes Rn}_{2}\rightarrow \cM^{\otimes n}_{d_A}$ and a decoding channel $\cD_n:\cM^{\otimes n}_{d_B}\rightarrow \cM^{\otimes Rn}_{2}$ such that $\onenorm{(\cD_n\circ \cT^{\otimes n}\circ \cE_n)(\rho)-\rho}\leq \epsilon$, for all $\rho\in\cM^{\otimes Rn}_2$. A rate $R$ is an \emph{achievable communication rate} if it is $\epsilon$-achievable for all $\epsilon\in(0,2]$. The \emph{quantum capacity} of $\cT$ is the supremum over all achievable rates.

\end{definition}

\textbf{Representation of qubit channels.} Consider the set of single-qubit Pauli operators $\mathcal{P}=\{\id,\rX,\rY,\rZ\}$ defined as
\begin{equation}
    \id=\begin{pmatrix} 1&0\\ 0&1 \end{pmatrix},\;
    \rX=\begin{pmatrix} 0&1\\ 1&0 \end{pmatrix},\;
    \rY=\begin{pmatrix} 0&-\mathrm{i}\\ \mathrm{i}&0 \end{pmatrix},\;
    \rZ=\begin{pmatrix} 1&0\\ 0&-1 \end{pmatrix}.
\end{equation}
The set of $2\times 2$ Hermitian matrices is a real vector space. One can easily verify that $\mathcal{P}$ forms an orthogonal basis for this set, with respect to the Hilbert-Schmidt inner product defined as $\langle A,B \rangle=\mathrm{Tr}\br{A^\dagger B}$. In particular, any single-qubit quantum state $\rho$ can be written in the \emph{Bloch sphere representation} as 
\begin{equation}
    \rho = \frac{1}{2} \br{\id+r\cdot \sigma}\enspace,
\end{equation}
where $r=\br{r_x,r_y,r_z}\in \R^3$ with $\twonorm{r} \leq 1$ and $\sigma$ is the vector of Pauli matrices $\br{\rX,\rY,\rZ}$. It has been shown in~\cite{Ruskai02}, that the action of any qubit channel $\cN:\cM_2\ra \cM_2$ can be written as $\cN\br{\rho}=U\fn{\cC_{[t,\lambda]}}{V\rho V^\dagger}U^\dagger$, with 
\begin{equation}
    \fn{\cC_{[t,\lambda]}}{\frac{1}{2}\Br{\id+r\cdot \sigma}} = \frac{1}{2}\br{\id+\br{t+\Lambda r}\cdot \sigma} \enspace,
\end{equation}
where $t=\br{t_x,t_y,t_z}\in \R^3$ with $\twonorm{t} \leq 1$ and $\Lambda=\mathrm{diag}\br{\lambda}$ with $\lambda=\br{\lambda_x,\lambda_y,\lambda_z}\in \R^3$ and $\norm{\Lambda}\leq 1$. The qubit channel $\cN$ is \emph{unital}, i.e., $\cN(\id)=\id$, if and only if $t=0$ in this representation. Imposing complete positivity of the unital map $\cC_{[0,\lambda]}$ is equivalent to restricting the parameters $\lambda\in\R^3$ to a regular tetrahedron~\cite{Ruskai02} with corners corresponding to the conjugation by the Pauli operators $\id,\rX,\rY,\rZ$, respectively. Specifically, the set of unital qubit channels is given by
\begin{equation}\label{eq:tetrahedron}
        \cbr{U\fn{\cC_{[0,\lambda]}}{V\rho V^\dagger}U^\dagger : U,V \mathrm{ unitary}, \lambda \in \mathrm{conv}\!\cbr{\br{1,1,1},\br{1,-1,-1},\br{-1,1,-1},\br{-1,-1,1}} }.
\end{equation}
The middle points on the edges of the tetrahedron of Eq.~\eqref{eq:tetrahedron} are the permutations of $\br{\pm1,0,0}$, and they are the vertices of a regular octahedron corresponding to unital entanglement breaking qubit channels. From these geometric considerations and the fact that entanglement breaking channels stay entanglement breaking under composition by unitary quantum channels, we obtain the following lemma:

\begin{lemma}\label{lem:UnitalCaseLemma}
Any unital qubit channel $\cN:\cM_2\ra \cM_2$ can be written as a convex combination
\[
\cN = (1-p)\cT + p\cB,
\]
where $\cT:\cM_2\ra\cM_2$ is a unitary quantum channel and $\cB:\cM_2\ra\cM_2$ is entanglement breaking. 
\end{lemma}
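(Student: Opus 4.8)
The plan is to exploit the tetrahedral parametrization of unital qubit channels recalled just before the lemma. By Eq.~\eqref{eq:tetrahedron}, after pre- and post-composing with the unitary channels $V(\cdot)V^\dagger$ and $U(\cdot)U^\dagger$ — which preserve both unitarity and the entanglement-breaking property under composition — it suffices to prove the statement for a channel of the form $\cC_{[0,\lambda]}$ with $\lambda$ in the tetrahedron $\mathcal{S} = \mathrm{conv}\{(1,1,1),(1,-1,-1),(-1,1,-1),(-1,-1,1)\}$. The four vertices correspond to conjugation by $\id,\rX,\rY,\rZ$, i.e.\ to unitary channels, and the octahedron $\mathcal{O}$ with vertices given by the permutations of $(\pm 1,0,0)$ — consisting of unital entanglement-breaking channels — is inscribed in $\mathcal{S}$ with its six vertices at the edge midpoints of $\mathcal{S}$.

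The key geometric step is the following: any point $\lambda \in \mathcal{S}$ can be written as $\lambda = (1-p)\, v + p\, w$ with $v$ one of the four vertices of $\mathcal{S}$ and $w \in \mathcal{O}$, for some $p \in [0,1]$. To see this, fix $\lambda \in \mathcal{S}$ and pick a vertex $v$ of $\mathcal{S}$; consider the ray from $v$ through $\lambda$ and let it exit $\mathcal{S}$ at a boundary point $b$, so $\lambda = (1-p)v + p b$ for a unique $p \in [0,1]$. The point $b$ lies on one of the three facets of $\mathcal{S}$ not containing $v$ (or on an edge/vertex thereof), and each such facet is a triangle whose three edges have midpoints that are vertices of $\mathcal{O}$; one checks that the intersection of $\mathcal{S}$'s boundary "opposite" to $v$ — more precisely, the union of those three facets — is contained in $\mathcal{O}$. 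Actually the cleanest route is to observe directly that $\mathcal{O} = \mathcal{S} \cap (-\mathcal{S} + \text{shift})$ is cut out by the four inequalities defining $\mathcal{S}$ together with the four "opposite" inequalities, and that for any vertex $v$ the segment from $v$ to any point of $\mathcal{O}$ stays in $\mathcal{S}$; then a convexity/extreme-point argument shows every $\lambda\in\mathcal{S}$ is such a convex combination. Translating back: $\cC_{[0,\lambda]} = (1-p)\cT' + p\cB'$ with $\cT'$ a Pauli conjugation (unitary) and $\cB'$ unital entanglement breaking, and conjugating by $U,V$ preserves this decomposition, giving the claim with $\cT = U\cT'(V\cdot V^\dagger)U^\dagger$ unitary and $\cB$ entanglement breaking.

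The main obstacle is the geometric claim that the "half" of the tetrahedron reachable from a fixed vertex through a point of $\mathcal{O}$ actually covers all of $\mathcal{S}$ — i.e.\ that $\mathcal{O}$ is "large enough" and positioned so that $\mathcal{S} = \bigcup_{v} \mathrm{conv}(\{v\}\cup\mathcal{O})$, where the union is over the four vertices $v$. This is an elementary but slightly fiddly polytope computation: one can verify it by checking that each of the four sub-tetrahedra obtained by joining a vertex of $\mathcal{S}$ to the octahedron's nearest face tiles $\mathcal{S}$, or equivalently that the octahedron together with these four corner caps exhausts the volume. Once this packing fact is in hand, the rest of the argument — reducing to $\cC_{[0,\lambda]}$ and noting stability of the two channel classes under unitary conjugation — is immediate from the material already developed in the Preliminaries.
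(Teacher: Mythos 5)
Your approach is the same as the paper's: the paper offers no more than the sentence preceding the lemma (the edge midpoints of the tetrahedron in Eq.~\eqref{eq:tetrahedron} are the vertices of the octahedron of unital entanglement-breaking channels, plus stability of both channel classes under unitary conjugation), so the geometric subdivision you describe is exactly the content that needs to be supplied, and your final formulation of it is correct. One caution: the first version of your key step --- following the ray from a vertex $v$ through $\lambda$ until it exits $\partial\mathcal{S}$ and claiming the exit point lies in the octahedron --- is false as stated. The facets of $\mathcal{S}$ are not contained in the octahedron (their own vertices are vertices of $\mathcal{S}$, which violate $|x|+|y|+|z|\le 1$), so for a \emph{fixed} $v$ the exit point need not be entanglement breaking; you must choose $v$ adapted to $\lambda$ and allow $w$ to be an interior point of $\mathcal{S}$. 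The clean version, which you do arrive at, is the midpoint subdivision: writing $\ell_v(\lambda)=\langle v,\lambda\rangle$ for each vertex $v$, one has $\mathcal{S}=\{\lambda:\ell_v(\lambda)\ge -1\ \forall v\}$, the octahedron is $\mathcal{S}\cap\{\lambda:\ell_v(\lambda)\le 1\ \forall v\}$, and each corner cap $\{\lambda\in\mathcal{S}:\ell_v(\lambda)\ge 1\}$ equals $\mathrm{conv}(\{v\}\cup F_v)$ with $F_v$ the octahedron facet spanned by the three edge midpoints adjacent to $v$. Every $\lambda\in\mathcal{S}$ lies in the octahedron or in one of these caps, giving $\lambda=(1-p)v+pw$ with $w$ in the octahedron; since $\lambda\mapsto\cC_{[0,\lambda]}$ is affine, this yields the desired channel decomposition after conjugating by $U$ and $V$.
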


\section{Maximal length of noisy quantum computation} \label{sec:exp bnd}

Let $C$ be a quantum circuit of width $n$ and length $T$ defined by a sequence of quantum channels $\cbr{\cL_i}_{i\in[T]}$, as depicted in Fig.~\ref{fig:figure2}. For $i\in\{0,\ldots,T\}$, let $(A_i:B_i)$ be a bipartition of the registers of the circuit after $i$ time steps into disjoint subsets such that $\cL_i\in \mathrm{SepC}\br{A_{i-1},A_{i}:B_{i-1},B_i}$, i.e., $\cL_i$ in each time step is a separable channel with respect to the bipartition of the subsystems into the $A$ and $B$ parts. Furthermore, for every $i\in\cbr{0,\ldots,T}$, let $A_i=A^c_iA^q_i$ and $B_i=B^c_iB^q_i$, where $A^c_i$ and $B^c_i$ are classical subsystems of $A_i$ and $B_i$, respectively, of arbitrary finite dimensions and $A^q_i$ and $B^q_i$ are the quantum subsystems which contain a total of $n_i$ qubits. Let $\cN$ be an arbitrary non-unitary qubit channel and consider the noisy implementation of $C$ in the $\mathrm{i.i.d._\cN}$ model. Recall that in our model only the quantum subsystems are subject to the noise channel $\cN$. We prove that $\mathrm{i.i.d.}_\cN(C)$ is not capable of preserving the entanglement between the $A$ and $B$ subsystems for more than exponential time in the width of $C$.

We define
\begin{align*}
\chisep{A:B}\br{\tau_{AB}} &= \min_{\sigma \in \mathrm{Sep}\br{A:B}} \chi^2(\tau_{AB}, \sigma_{AB}) \ .
\end{align*}
It is simple to see that this measure is monotone under separable quantum channels and convex in the state.

We denote by $\rho_{AB}^i$ the state of the noisy circuit after $i$ time steps. The following lemma is our main technical contribution which is of independent interest.

\begin{figure}
    \centering
    \includegraphics{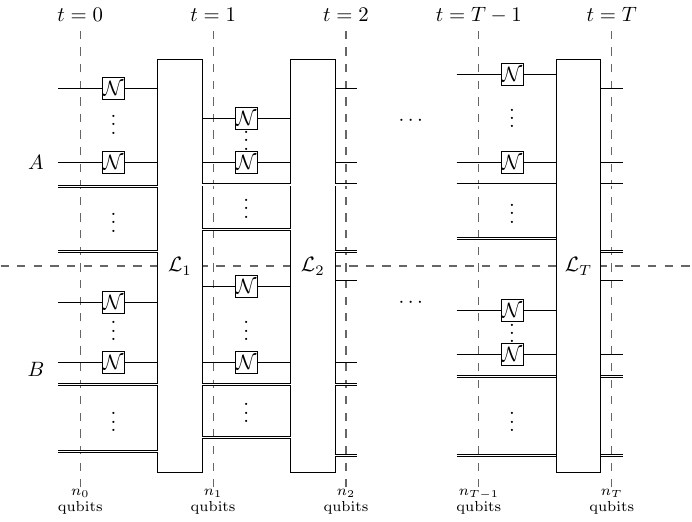}
    \caption{An illustration of $\mathrm{i.i.d._\cN}(C)$. In each time step, every qubit is subject to the noise channel $\cN$. The classical subsystems are assumed to be noise-free. The registers are  partitioned into disjoint subsets A and B such that the quantum channels $\cL_i$ are separable with respect to this bipartition.}
    \label{fig:figure2}
\end{figure}

\begin{lemma} \label{lem:chiSEPcontraction}
    For every non-unitary qubit channel $\cN$, there exists a constant $p\in (0,1]$ such that $\chisep{A:B}\br{\rho^{i+1}_{AB}} \leq \br{1-p^{n}} \chisep{A:B}\br{\rho^{i}_{AB}}$.
\end{lemma}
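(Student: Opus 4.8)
The plan is to isolate the single time step $\cL_{i+1}$, which factors as $\cL_{i+1} = \mathcal{K}_{i+1} \circ \cN^{\otimes n_i}$ where $\mathcal{K}_{i+1} \in \mathrm{SepC}(A_i,A_{i+1}:B_i,B_{i+1})$ and $\cN^{\otimes n_i}$ is the noise layer acting on the $n_i \le n$ qubits present at time $i$. Since $\chi^2_{\mathrm{Sep}(A:B)}$ is monotone under separable channels, it suffices to show that the noise layer alone contracts $\chi^2_{\mathrm{Sep}}$ by the factor $(1-p^{n})$, i.e. that $\chi^2_{\mathrm{Sep}(A:B)}\big((\cN^{\otimes n_i})(\rho^i_{AB})\big) \le (1-p^{n})\,\chi^2_{\mathrm{Sep}(A:B)}(\rho^i_{AB})$, and then note the extra identity channel on classical registers and any unused qubits does nothing. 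Write $\cN = (1-p)\cM' + p\cM$ where $\cM$ is the ``good'' part guaranteed by the definitions of $p_1,p_2$: for unital $\cN$, Lemma~\ref{lem:UnitalCaseLemma} gives $\cN = (1-p)\cT + p\cB$ with $\cB$ entanglement breaking; for non-unital $\cN$ one uses the $p_2$ branch with $\cN \ge q\cM$. I expect the cleanest route is to distribute this decomposition over the tensor power and peel off the term where the ``good'' channel $\cM$ acts on \emph{every} qubit: $\cN^{\otimes n_i} = p^{n_i}\,\cM^{\otimes n_i} + (1-p^{n_i})\,\Phi$ for some channel $\Phi$ (a suitably normalized sum of the remaining $2^{n_i}-1$ product terms).

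The key structural point is that $\cM^{\otimes n_i}$ applied to the $A$ and $B$ quantum subsystems produces a state that is separable across $A:B$ — in the entanglement-breaking case this is immediate since $\cM$ breaks all entanglement with anything, so $\cM^{\otimes n_i}$ applied to the quantum part leaves the state separable regardless of what the classical registers do; in the $p_2$ case one needs the quantitative version involving $\lambda_{\min}(C_{\cM^\dagger\circ\cM})$, which presumably controls how close $\cM$ is to a measure-and-prepare map well enough that $\chi^2_{\mathrm{Sep}}$ of the output is bounded by a constant. Granting that $\cM^{\otimes n_i}(\rho^i_{AB})$ lies in (or very near) $\mathrm{Sep}(A:B)$, we get by joint convexity of $\chi^2_{\mathrm{Sep}}$ in its state argument:
\begin{equation*}
\chi^2_{\mathrm{Sep}(A:B)}\big(\cN^{\otimes n_i}(\rho^i_{AB})\big) \le p^{n_i}\,\chi^2_{\mathrm{Sep}(A:B)}\big(\cM^{\otimes n_i}(\rho^i_{AB})\big) + (1-p^{n_i})\,\chi^2_{\mathrm{Sep}(A:B)}\big(\Phi(\rho^i_{AB})\big).
\end{equation*}
The first term is either $0$ (entanglement-breaking case) or bounded by a small absolute constant, and the second term is at most $\chi^2_{\mathrm{Sep}(A:B)}(\rho^i_{AB})$ because $\Phi$ is a separable channel across $A:B$ (each of its product Kraus terms factorizes) and $\chi^2_{\mathrm{Sep}}$ is monotone. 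Combining, and using the hypothesis $\chi^2_{\mathrm{Sep}(A:B)}(\rho^i_{AB}) \ge 1/16$ to absorb the constant from the first term into a reduction of the contraction factor, yields the claimed bound with $p^{n_i} \ge p^{n}$ (since $p \le 1$ and $n_i \le n$).

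The main obstacle is the non-unital case: there $\cM$ is only a general quantum channel with $\cN \ge q\cM$, not entanglement breaking, so $\cM^{\otimes n_i}(\rho^i_{AB})$ need not be separable, and one must instead argue that applying $\cM^{\otimes n_i}$ — or rather the composition $\cM^\dagger\circ\cM$ entering the definition of $p_2$ — forces the $\chi^2$-distance to the separable set down to an absolute constant. This is where the factor $\lambda_{\min}(C_{\cM^\dagger\circ\cM})/204800$ must come from: I expect one needs a lemma stating that if a qubit channel $\cM$ has $\lambda_{\min}(C_{\cM^\dagger\circ\cM})$ bounded below, then $\cM^{\otimes k}$ maps any state to within constant $\chi^2_{\mathrm{Sep}}$-distance of a separable state, proved by relating $\chi^2_{\mathrm{Sep}}$ to a product-state fidelity or to the $\chi^2$ contraction coefficient $\eta_\chi$ of $\cM$ (using Eqs.~\eqref{eq:tr_dist vs chi2}–\eqref{eq:eta_chi vs eta_tr}), together with the observation that a non-unital qubit channel is strictly contractive so $\eta_\chi(\cM) < 1$. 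Reconciling the two branches into a single clean constant $p$, and tracking that the classical side-information in $A^c_i, B^c_i$ genuinely cannot help (it is untouched by the noise layer, and separability is preserved under arbitrary classically-controlled local operations), are the remaining points requiring care.
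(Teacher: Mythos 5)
Your treatment of the unital case is essentially the paper's: expand $\cN^{\otimes n_i}=((1-p)\cT+p\cB)^{\otimes n_i}$, peel off the all-$\cB$ term (which kills $\chisep{A:B}$ exactly, since an entanglement-breaking channel applied to every qubit of a cc-qq state yields a separable state), and absorb the remaining $2^{n_i}-1$ product terms into a single separable channel $\cS$, so that convexity and monotonicity give the factor $1-p^{n_i}$. Note that in this branch no hypothesis $\chisep{A:B}(\rho^i)\geq 1/16$ is needed at all.

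The non-unital case is where your proposal has a genuine gap. The lemma you say you ``expect'' --- that a lower bound on $\lambda_{\min}\br{C_{\cM^\dagger\circ\cM}}$ forces $\cM^{\otimes k}$ to map \emph{any} state to within a constant $\chisep{A:B}$-distance of separable --- is false. Take $\cM$ to be an amplitude damping channel with tiny damping parameter: $\lambda_{\min}\br{C_{\cM^\dagger\circ\cM}}$ is bounded away from zero, yet $\cM^{\otimes k}$ applied to half of a maximally entangled state leaves it very far from $\mathrm{Sep}(A:B)$ (indeed, if one application sufficed, Theorem~\ref{thm:no_long_memory} would hold with $T=1$ and the exponential bound $T\geq(2/p)^{2n}$ would be pointless). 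Consequently the decomposition $\cN^{\otimes n_i}=q^{n_i}\cM^{\otimes n_i}+(1-q^{n_i})\Phi$ buys nothing here: the leading term is not small, and monotonicity alone returns $\chisep{A:B}(\rho^i)$ with no contraction. The paper abandons the convex-split route entirely in this case and instead proves a \emph{per-step multiplicative} contraction: Lemma~\ref{lem:Chi2ContractionFromEta} shows that for cc-qq states with $\chisep{A:B}(\rho)\geq\epsilon$, any channel $\cT$ contracts $\chisep{A:B}$ by $1-\frac{\epsilon^2}{100}(1-\eta_\chi(\cT))$ (this is where the hypothesis $\geq 1/16$ and the careful handling of the classical registers via Lemma~\ref{lem:expr_chisep} actually enter --- some conditional states $\rho^{xy}$ may already be nearly separable and contribute no contraction, hence the $S_1/S_2$ splitting); then Lemma~\ref{lem:UB on Dobrushin} bounds $\eta_\chi(\cN^{\otimes n_i})\leq\eta_{\mathrm{tr}}(\cN^{\otimes n_i})\leq\sqrt{1-\lambda_{\min}\br{C_{\cN^\dagger\circ\cN}}^{n_i}/4^{n_i}}$ via a fidelity/Stinespring argument, the point being that the Choi-matrix lower bound tensorizes. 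The decomposition $\cN=(1-q)\cM'+q\cM$ with $\cM$ a non-unital extreme point is used only to certify $\lambda_{\min}\br{C_{\cN^\dagger\circ\cN}}>0$, not to split the tensor power. You correctly sensed that $\eta_\chi$ and the relation to $\eta_{\mathrm{tr}}$ must appear, but the structure of the argument you propose around them does not work.
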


\begin{proof}
    In time step $i+1$, each of the $n_{i}$ qubits in the noisy circuit goes through the noise channel $\cN:\cM_2 \rightarrow \cM_2$ followed by a separable channel $\cL_{i+1}\in \mathrm{SepC}\br{A_{i},A_{i+1}:B_{i},B_{i+1}}$. We use two different approaches to prove Lemma~\ref{lem:chiSEPcontraction} for unital and non-unital quantum channels. 
    
    \textbf{Case I: $\cN$ is unital.} In this case, we prove the contractivity of $\chisep{A:B}$ independent of the initial value $\chisep{A:B}\br{\rho^{i}}$. By Lemma \ref{lem:UnitalCaseLemma}, we may write 
    \begin{equation}
        \cN= \br{1-p} \, \cT + p \, \cB \enspace,
    \end{equation}
    where $\cT:\cM_2\ra\cM_2$ is a unitary quantum channel and $\cB:\cM_2\ra \cM_2$ is entanglement breaking. Since $\cN$ itself is not unitary, we have $p>0$. There exists a separable quantum channel $\cS:\cM_{2^{n_{i}}} \rightarrow \cM_{2^{n_{i}}}$ with respect to the bipartition into $A$ and $B$, such that 
    \begin{equation}
        \cN^{\otimes n_{i}} = \br{1-p^{n_{i}}} \cS + p^{n_{i}}\, \cB^{\otimes n_i}\enspace,
    \end{equation}
    and we can compute
    \begin{align}
       \chisep{A:B}\br{\rho^{i+1}} &= \chisep{A:B}\br{\br{\cL_{i+1} \circ \br{\cN^{\otimes n_{i}} \otimes \cI^{A^c_iB^c_i}} }\br{\rho^{i}}} \\
       &\leq \chisep{A:B}\br{\br{\cN^{\otimes n_{i}} \otimes \cI^{A^c_iB^c_i}}\br{\rho^i}}\\
       &\leq \br{1-p^{n_{i}}} \chisep{A:B}\br{\br{\cS\otimes \cI^{A^c_iB^c_i}}\br{\rho^i}} \\
       &\hspace{1.4em} + p^{n_{i}}\, \chisep{A:B}\br{\br{\cB^{\otimes n_{i}}\otimes \cI^{A^c_iB^c_i}}\br{\rho^i}}\\
       &= \br{1-p^{n_i}} \chisep{A:B}\br{\br{\cS\otimes \cI^{A^c_iB^c_i}}\br{\rho^i}} \\
       &\leq \br{1-p^{n}} \chisep{A:B}\br{\rho^i}\enspace.
    \end{align}
    Here, the first and the last inequalities follow from monotonicity of $\chisep{A:B}$ under separable quantum channels, and the second inequality follows from convexity of $\chisep{A:B}$. Moreover, the second equality holds since $\cB$ is entanglement breaking and $\rho^i$ is classical on $A^c_iB^c_i$. 
    
    \textbf{Case II: $\cN$ is non-unital.} By Lemma~\ref{lem:Chi2ContractionFromEta} in the appendix for $\cT=\cN^{\otimes n_i}$ we have 
    \begin{equation} \label{eq:chisep_contraction}
        \chisep{A:B}\br{\br{\cN^{\otimes n_{i}} \otimes \cI^{A^c_iB^c_i}}\br{\rho^i}} \leq \sqrt{\eta_\chi\br{\cN^{\otimes n_i}}} \chisep{A:B} \br{\rho^i}\enspace.
    \end{equation}
    By Eq.~\eqref{eq:eta_chi vs eta_tr}, we have $\eta_\chi\br{\cN^{\otimes n_i}}\leq \eta_\mathrm{tr}\br{\cN^{\otimes n_i}}$. So we focus on bounding $\eta_\mathrm{tr}\br{\cN^{\otimes n_i}}$. The definition \eqref{equ:QDobrushin} of the trace-norm contraction coefficient might not be easily computable for all channels, and it is unclear how it behaves under tensor powers. In order to prove the lemma we prove a general upper bound on the trace-norm contraction coefficient of a quantum channel.
    
    \begin{lemma} \label{lem:UB on Dobrushin}
        For any quantum channel $\cT:\cM_d \rightarrow \cM_d$, we have
        \begin{equation}
            \eta_{\mathrm{tr}}(\cT)\leq \sqrt{1-\frac{\lambda^{\mathrm{out}}_{\min}\br{\cT^\dagger \circ \cT}}{d^2}}\enspace.
        \end{equation}
   Here, we used the minimal output eigenvalue $\lambda^{\mathrm{out}}_{\min} \br{\cT^\dagger \circ \cT} = \displaystyle{\min_{\psi,\phi}} \,\bra{\psi}(\cT^\dagger \circ \cT)(\proj{\phi}{\phi})\ket{\psi}$, where the minimum goes over pure quantum states.
    \label{lem:EtaBoundMinOutEV}
    \end{lemma}

    \begin{proof}
        In \cite{watrous2009semidefinite} the fidelity $\mathrm{F}(\rho,\sigma)=\|\sqrt{\rho}\sqrt{\sigma}\|^2_1$ of quantum states $\rho$ and $\sigma$ on $\C^d$ is expressed as a semidefinite program 
        \begin{equation}
            \mathrm{F}(\rho,\sigma) = \max\cbr{ \bra{v} W\ket{v} ~:~ W_{AE}\geq 0, \Tr_E\br{W_{AE}} \leq \rho },
        \end{equation}
        where $v\in \C^{d}\otimes \C^{d_E}$ is a purification of $\sigma$. Consider the case $\rho=\cT(\proj{\psi}{\psi})$ and $\sigma=\cT(\proj{\phi}{\phi})$ for pure states $\ket{\psi},\ket{\phi}\in\C^d$. Let $V:\C^{d}\rightarrow \C^{d}\otimes \C^{d_E}$ denote a Stinespring dilation of $\cT$ such that 
        \begin{equation}
            \cT(X) = \Tr_E\br{VXV^{\dagger}}\enspace.
        \end{equation}
        Note that $V\ket{\phi}\in \C^{d}\otimes \C^{d_E}$ is a purification of $\sigma$. Choosing $W=\cT(\proj{\psi}{\psi})\otimes \frac{\id_E}{d_E}$ we find that 
        \begin{equation}
             \mathrm{F}(\rho,\sigma)\geq \bra{\phi}V^\dagger \br{ \cT(\proj{\psi}{\psi})\otimes \frac{\id_E}{d_E}} V\ket{\phi} = \frac{1}{d_E}\bra{\phi}\cT^\dagger \circ \cT\br{\proj{\psi}{\psi}} \ket{\phi}\enspace,
        \end{equation}
        where we used that the adjoint of $\cT$ has the Stinespring dilation
        \begin{equation}
            \cT^\dagger(X) = V^\dagger (X\otimes \id_E) V\enspace.
        \end{equation}
        Finally, we can apply the well-known Fuchs-van-de-Graaf inequality relating the trace norm and the fidelity ($\| \rho - \sigma \|_1 \leq \sqrt{1-\mathrm{F}(\rho, \sigma)}$ for any quantum states $\rho, \sigma$) to conclude
        \begin{align*}
            \eta_{\text{tr}}(\cT) &= \frac{1}{2}\max_{\psi\perp\phi}\| \cT(\proj{\psi}{\psi}-\proj{\phi}{\phi})\|_1 \\
            &\leq \max_{\psi\perp\phi} \sqrt{1-\mathrm{F}(\cT(\proj{\psi}{\psi}),\cT(\proj{\phi}{\phi}))} \\
            &\leq \max_{\psi\perp\phi}\sqrt{1-\frac{1}{d^2}\bra{\phi}\br{\cT^\dagger \circ \cT} \br{\proj{\psi}{\psi}} \ket{\phi}} \\
            & \leq  \sqrt{1-\frac{\lambda^{\mathrm{out}}_{\min}\br{ \cT^\dagger\circ \cT}}{d^2}}\enspace,
        \end{align*}
        where we used that $d_E\leq d^2$ for any quantum channel $\cT:\cM_d \rightarrow \cM_d$.
    \end{proof}

Computing $\lambda^{\mathrm{out}}_{\min}\br{\cT^\dagger \circ \cT}$ might be difficult in general, but for our purposes an easy lower bound suffices. This bound also behaves nicely under tensor product of channels; an important property which we use below. For $\ket{\phi}=\sum_{i\in[d]} \alpha_i \ket{i}$, let $\ket{\overline{\phi}}=\sum_{i\in[d]} \overline{\alpha_i} \ket{i}$, where $\overline{\alpha_i}$ denotes the complex conjugate of $\alpha_i$. Then we have
\begin{equation}
\lambda^{\mathrm{out}}_{\min} \br{\cT^\dagger\circ \cT} = \min_{\psi,\phi} \,\bra{\psi}(\cT^\dagger \circ \cT)(\proj{\phi}{\phi})\ket{\psi}=\min_{\psi,\phi}\, \bra{\psi\otimes \overline{\phi}}C_{\cT^\dagger\circ \cT} \ket{\psi\otimes \overline{\phi}} \geq \lambda_{\min} \br{C_{\cT^\dagger\circ \cT}}.
\label{eq:EasyBound}
\end{equation}

  We are now in position to present the proof of the case where $\cN$ is a non-unital qubit channel. We may write 
    \begin{equation}
        \cN= (1-q)\,\cM' + q \, \cM ,
    \end{equation}
  where $q\in(0,1]$ and $\cM$ is a non-unital extreme point of the set of qubit channels. By the characterization of extreme points of the set of quantum channels (see preliminaries or \cite{Choi}), the non-unital extremal qubit channel $\cM$ admits a Kraus decomposition with two Kraus operators $K_1,K_2:\C^2\ra \C^2$ such that the operators $K_1^\dagger K_1,K_1^\dagger K_2,K_2^\dagger K_1,K_2^\dagger K_2$ are linearly independent. Since these operators are exactly the Kraus operators of the completely positive map $\cM^\dagger \circ \cM$, we conclude that the Choi matrix $C_{\cM^\dagger \circ \cM}$ is full-rank and that
\[
\lambda_{\min} \br{C_{\cN^\dagger \circ \cN}} \geq q^2\lambda_{\min} \br{C_{\cM^\dagger \circ \cM}}>0 .
\]  

Setting $\cT=\cN^{\otimes n_i}$ and using~\eqref{eq:EasyBound} and Lemma~\ref{lem:EtaBoundMinOutEV}, we have 
    \begin{align}
        \eta_{\text{tr}}(\cT) &\leq \sqrt{1-\frac{\lambda_{\min}\br{C_{\cT^\dagger \circ \cT}}}{2^{2n_i}}} \!=\!
        \sqrt{1-\frac{\lambda_{\min}\br{\br{C_{\cN^\dagger \circ \cN}}^{\otimes n_i}}}{2^{2n_i}}} \!=\! \sqrt{1-\br{\frac{\lambda_{\min}\br{C_{\cN^\dagger \circ \cN}}}{4}}^{n_i}} \nonumber\\
        &\leq \sqrt{1-\br{\frac{q^2\lambda_{\min}\br{C_{\cM^\dagger \circ \cM}}}{4}}^{n_i}} \leq 1-\br{\frac{q^2\lambda_{\min}\br{C_{\cM^\dagger \circ \cM}}}{8}}^{n_i}\enspace. \label{eq:ubd_on_eta_tr}
    \end{align}
    Combining Eqs.~\eqref{eq:chisep_contraction} and~\eqref{eq:ubd_on_eta_tr}, for $p=\br{q^2/16}\lambda_{\min}\br{C_{\cM^\dagger \otimes \cM}}>0$, we have 
    \begin{align*}
        \chisep{A:B}\br{\rho^{i+1}} &= \chisep{A:B}\br{\br{\cL_{i+1} \circ \br{\cN^{\otimes n_{i}} \otimes \cI^{A^c_iB^c_i}} }\br{\rho^{i}}}\\ 
        &\leq \chisep{A:B}\br{\br{\cN^{\otimes n_{i}} \otimes \cI^{A^c_iB^c_i}}\br{\rho^i}}\\
        &\leq \br{1-p^{n_i}} \chisep{A:B}\br{\rho^i} \leq \br{1-p^{n}} \chisep{A:B}\br{\rho^i}\enspace.
    \end{align*}
\end{proof}

One may ask whether analogous statements to Lemma \ref{lem:chiSEPcontraction} hold for entanglement measures constructed from other distance measures besides the $\chi^2$-divergence. Indeed, using results on contraction coefficients from~\cite{hirche2023quantum} (that appeared after the publication of the present paper), it is possible to prove an analogon of Lemma \ref{lem:chiSEPcontraction} for the relative entropy of entanglement. We do not know whether the same type of result is true for the trace distance to separable states. 

Recall the setup of Theorem~\ref{thm:no_long_memory}: Let $C$ be an arbitrary quantum circuit in which the gates are allowed to be arbitrary quantum channels acting on several qubits\suppress{, possibly controlled by a classical register of arbitrary (finite) dimension. Moreover, any classical processing is assumed to be perfect and instantaneous}. A supply of fresh ancilla qubits is available and the number of ancillas in $C$ in each time step is restricted by the circuit width. We will now consider two parallel instances of the circuit and apply them to an entangled state. If the parallel instances of the circuits are affected by the $\mathrm{i.i.d.}_\cN$ error model for a non-unitary qubit channel $\cN$, then they cannot keep entanglement for arbitrary long time. Theorem~\ref{thm:no_long_memory} will then follow from the following lemma:  

\begin{lemma}\cite[Theorem 3.56]{TQI_Watrous}\label{lem:tr_dist. vs dimond}
    Let $\cI_d: \cM_d \rightarrow \cM_d$ be the identity channel. Let $\cT: \cM_d \rightarrow \cM_d$ be a quantum channel satisfying $\onenorm{\cT-\cI_d} \leq \epsilon$, for some $\epsilon\in[0,2]$. Then for any $d'\geq 1$, we have
    \begin{equation*}
        \onenorm{\cT\otimes \cI_{d'}-\cI_{d}\otimes \cI_{d'}} \leq \sqrt{2\epsilon}.
    \end{equation*}
    This in particular implies that $\onenorm{\cT^{\otimes 2} -\cI_{d}^{\otimes 2}} \leq 2\onenorm{\cT\otimes \cI_{d}-\cI_{d}\otimes \cI_{d}} \leq 2\sqrt{2\epsilon}$.
\end{lemma}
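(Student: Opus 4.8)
This follows from the first inequality by a telescoping argument that I would dispatch first: writing
$\cT^{\otimes 2}-\cI_d^{\otimes 2}=(\cT-\cI_d)\otimes\cT+\cI_d\otimes(\cT-\cI_d)$, the triangle inequality gives
$\onenorm{\cT^{\otimes 2}-\cI_d^{\otimes 2}}\le\onenorm{(\cT-\cI_d)\otimes\cT}+\onenorm{\cI_d\otimes(\cT-\cI_d)}$, and since $(\cT-\cI_d)\otimes\cT=\br{(\cT-\cI_d)\otimes\cI_d}\circ(\cI_d\otimes\cT)$ with $\cI_d\otimes\cT$ a quantum channel (hence non-increasing on trace norms of Hermitian operators, which suffices as the maps involved are Hermiticity-preserving), each summand is at most $\onenorm{(\cT-\cI_d)\otimes\cI_d}=\onenorm{\cT\otimes\cI_d-\cI_d\otimes\cI_d}$. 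So I focus on the bound $\onenorm{\cT\otimes\cI_{d'}-\cI_d\otimes\cI_{d'}}\le\sqrt{2\epsilon}$.

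\textbf{Reduction to pure states and a fidelity bound.}
The map $\cT\otimes\cI_{d'}-\cI_d\otimes\cI_{d'}$ is Hermiticity-preserving and trace-annihilating, so by convexity of the trace norm its induced trace norm equals $\max_{\ket\psi}\onenorm{(\cT\otimes\cI_{d'})(\proj\psi\psi)-\proj\psi\psi}$ over unit vectors $\ket\psi\in\C^d\otimes\C^{d'}$; restricting to the Schmidt span of $\ket\psi$ one may assume $d'\le d$. Fix a Stinespring dilation $V:\C^d\to\C^d\otimes\C^{d_E}$ of $\cT$ with $d_E\le d^2$. Then $(\cT\otimes\cI_{d'})(\proj\psi\psi)=\Tr_E\Br{(V\otimes\id)\proj\psi\psi(V\otimes\id)^\dagger}$ while also $\proj\psi\psi=\Tr_E\Br{\proj\psi\psi\otimes\proj e e}$ for any unit $\ket e\in\C^{d_E}$. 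Since the trace norm does not increase under partial trace and $\onenorm{\proj uu-\proj vv}=2\sqrt{1-\abs{\braket uv}^2}$ for unit vectors, optimizing over $\ket e$ and invoking Uhlmann's theorem gives
\[
\onenorm{(\cT\otimes\cI_{d'})(\proj\psi\psi)-\proj\psi\psi}\ \le\ 2\sqrt{1-\mathrm F},\qquad \mathrm F:=\bra\psi(\cT\otimes\cI_{d'})(\proj\psi\psi)\ket\psi .
\]
With any Kraus representation $\cT(X)=\sum_k M_kXM_k^\dagger$ and $\rho:=\Tr_{\C^{d'}}\proj\psi\psi$, a direct computation shows $\mathrm F=\sum_k\abs{\Tr(M_k\rho)}^2$, so $\mathrm F$ depends on $\ket\psi$ only through the reduced state $\rho$.

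\textbf{The crux.}
It remains to show $1-\mathrm F=1-\sum_k\abs{\Tr(M_k\rho)}^2\le\tfrac{\epsilon}{2}$ for every density operator $\rho$, using only $\onenorm{\cT-\cI_d}\le\epsilon$; combined with the display this yields $\onenorm{(\cT\otimes\cI_{d'})(\proj\psi\psi)-\proj\psi\psi}\le 2\sqrt{\epsilon/2}=\sqrt{2\epsilon}$, and taking the supremum over $\ket\psi$ finishes the proof. The idea for this inequality is to convert the entanglement infidelity on the left into a state-discrimination statement: find a density operator $\sigma$ together with a two-outcome measurement that distinguishes $\cT(\sigma)$ from $\sigma$ with bias at least $1-\sum_k\abs{\Tr(M_k\rho)}^2$, so that $\onenorm{\cT(\sigma)-\sigma}\ge 2\br{1-\sum_k\abs{\Tr(M_k\rho)}^2}$ and hence this quantity is at most $\epsilon/2$. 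For a pure state $\ket\chi$ the measurement $\cbr{\proj\chi\chi,\ \id-\proj\chi\chi}$ already gives $\onenorm{\cT(\proj\chi\chi)-\proj\chi\chi}\ge 2\br{1-\sum_k\abs{\bra\chi M_k\ket\chi}^2}$; the real work is relating the value of $\rho\mapsto 1-\sum_k\abs{\Tr(M_k\rho)}^2$ at a general $\rho$ to such pure-state values, which is exactly where complete positivity of $\cT$ enters (through the normalization $\sum_k M_k^\dagger M_k=\id$) and where one must argue more carefully than by a bare application of the Fuchs--van de Graaf inequalities in order to land the constant $\tfrac12$.

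I expect this last comparison to be the main obstacle: the functional $\rho\mapsto 1-\sum_k\abs{\Tr(M_k\rho)}^2$ is concave in $\rho$, so it is \emph{not} automatically maximized at a pure state, and one genuinely has to use the structure of $\cT$ to control its supremum by $\onenorm{\cT-\cI_d}$. The preceding steps---reducing to pure inputs, dilating, and the fidelity estimate via partial-trace monotonicity and Uhlmann's theorem---are routine.
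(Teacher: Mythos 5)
The paper does not actually prove this lemma: it is imported from Watrous's book (Theorem 3.56), so there is no in-paper argument to compare against. Judged on its own terms, your write-up correctly assembles the standard scaffolding --- the telescoping decomposition for the ``in particular'' clause, the reduction to pure-state inputs, the Fuchs--van de Graaf step, and the identity $\bra{\psi}(\cT\otimes\cI_{d'})(\proj{\psi}{\psi})\ket{\psi}=\sum_k\abs{\Tr(M_k\rho)}^2$ --- but it stops exactly at the one step that carries all the content. The inequality $1-\sum_k\abs{\Tr(M_k\rho)}^2\le\epsilon/2$ for \emph{mixed} $\rho$ is announced as ``it remains to show'' and then replaced by a description of a strategy that you yourself flag as the main obstacle. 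Since everything before it is routine and everything after it is a one-line substitution, the proposal as written does not prove the lemma.

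The gap is closable, and the mechanism is worth spelling out because it explains why the hypothesis uses the induced trace norm over \emph{all} inputs of trace norm at most one rather than over states. Diagonalize $\rho=\sum_i p_i\proj{x_i}{x_i}$ and set $c_{ij}=\sum_k\bra{x_i}M_k\ket{x_i}\overline{\bra{x_j}M_k\ket{x_j}}$, so that $\sum_k\abs{\Tr(M_k\rho)}^2=\sum_{i,j}p_ip_jc_{ij}$. A direct computation gives $c_{ij}-1=\bra{x_i}(\cT-\cI_d)(\proj{x_i}{x_j})\ket{x_j}$, whence $\abs{1-c_{ij}}\le\onenorm{(\cT-\cI_d)(\proj{x_i}{x_j})}\le\epsilon$; note that $\proj{x_i}{x_j}$ is a non-Hermitian trace-norm-one input, which is precisely where the full strength of the hypothesis enters and why your worry about the concavity of $\rho\mapsto 1-\sum_k\abs{\Tr(M_k\rho)}^2$ dissolves --- one never compares to pure-state values of that functional at all. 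Averaging over $i,j$ gives $1-\sum_k\abs{\Tr(M_k\rho)}^2\le\epsilon$ and hence the bound $2\sqrt{\epsilon}$; obtaining the sharper constant $\sqrt{2\epsilon}$ claimed in the statement requires a more careful accounting of the off-diagonal terms (this is what the cited Theorem 3.56 of Watrous delivers). Two smaller points: your reduction of $\onenorm{(\cT-\cI_d)\otimes\cI_{d'}}$ to pure-state inputs needs justification beyond convexity, since convexity only handles Hermitian inputs and passing from general to Hermitian inputs naively costs a factor of $2$ (the clean route is the pure-state characterization of the completely bounded trace norm for Hermiticity-preserving maps); and the ``in particular'' part is fine as you argue it.
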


Combining Lemmas~\ref{lem:chiSEPcontraction} and \ref{lem:tr_dist. vs dimond}, we have:

\setcounter{section}{1}
\setcounter{theorem}{0}

\begin{theorem}[\textbf{Restated}]
    Let $\cI$ denote the qubit identity channel and $\cN$ be a non-unitary qubit channel. Then there exists a constant $p\in(0,1]$ only depending on $\cN$ such that the following holds:
    Let $C$ be an arbitrary circuit of length $T$ and width $n$ such that $T\geq \br{2/p}^{2n}$. Then for any pair of ideal encoding and decoding maps $\cE$ and $\cD$ (of appropriate input and output dimensions), we have
    \begin{equation*}
        \onenorm{\cD \circ \mathrm{i.i.d.}_\cN\br{C} \circ \cE - \cI} \geq \epsilon_0 \enspace,
    \end{equation*}
    for some universal constant $\epsilon_0\geq 1/128$.
\end{theorem}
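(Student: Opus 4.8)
The plan is to argue by contradiction. Suppose that for some circuit $C$ of length $T\geq\br{2/p}^{2n}$ and width $n$ (with $p$ the constant furnished by Lemma~\ref{lem:chiSEPcontraction}) and some ideal encoding/decoding maps $\cE,\cD$ one had $\onenorm{\cC-\cI}<1/128$, where $\cC\defeq\cD\circ\mathrm{i.i.d.}_\cN(C)\circ\cE$. We may assume $n\geq1$, since for $n=0$ the map $\cE$ destroys the qubit, $\cC$ is a replacer channel, and then $\onenorm{\cC-\cI}\geq1>1/128$. The crucial move is to run \emph{two parallel copies} of $C$: the circuit $C\otimes C$ has width $2n$, and, writing $A$ for the first copy and $B$ for the second, every one of its layers $\cL_i\otimes\cL_i$ belongs to $\mathrm{SepC}$; moreover $\mathrm{i.i.d.}_\cN(C\otimes C)=\mathrm{i.i.d.}_\cN(C)^{\otimes 2}$ because the noise acts i.i.d.\ on every qubit, so $\br{\cD\otimes\cD}\circ\mathrm{i.i.d.}_\cN(C\otimes C)\circ\br{\cE\otimes\cE}=\cC^{\otimes 2}$. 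Let $\ket{\Phi}=\tfrac{1}{\sqrt{2}}\br{\ket{00}+\ket{11}}$ and $\Phi=\proj{\Phi}{\Phi}$, set $\rho^0\defeq\br{\cE\otimes\cE}(\Phi)$, and let $\rho^i$ denote the state of $\mathrm{i.i.d.}_\cN(C\otimes C)$ after $i$ time steps, so $\cC^{\otimes 2}(\Phi)=\br{\cD\otimes\cD}(\rho^T)$.

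The first half of the argument bounds $\chisep{A:B}\br{\cC^{\otimes 2}(\Phi)}$ from above. Each time step of $\mathrm{i.i.d.}_\cN(C\otimes C)$ is a noise layer $\cN^{\otimes n_i}\otimes\cN^{\otimes n_i}$ followed by $\cL_{i+1}\otimes\cL_{i+1}$, both separable for the $A{:}B$ cut, so by monotonicity of $\chisep{A:B}$ under separable channels the sequence $i\mapsto\chisep{A:B}\br{\rho^i}$ is non-increasing; and Lemma~\ref{lem:chiSEPcontraction}, applied to the width-$2n$ circuit $C\otimes C$, gives $\chisep{A:B}\br{\rho^{i+1}}\leq\br{1-p^{2n}}\chisep{A:B}\br{\rho^i}$ whenever $\chisep{A:B}\br{\rho^i}\geq1/16$. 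Since $\cE\otimes\cE$ is separable, $\chisep{A:B}\br{\rho^0}\leq\chisep{A:B}(\Phi)\leq\chi^2\!\br{\Phi,\tfrac12\br{\proj{00}{00}+\proj{11}{11}}}=1$, and $T\geq\br{2/p}^{2n}$ forces $p^{2n}T\geq2^{2n}\geq4$, so $\br{1-p^{2n}}^T\chisep{A:B}\br{\rho^0}\leq e^{-4}<1/16$. A straightforward iteration --- using the contraction while $\chisep{A:B}\br{\rho^i}\geq1/16$ and monotonicity once it drops below --- then shows $\chisep{A:B}\br{\rho^T}\leq1/16$, and applying the separable channel $\cD\otimes\cD$ gives $\chisep{A:B}\br{\cC^{\otimes 2}(\Phi)}\leq1/16$. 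By~\eqref{eq:tr_dist vs chi2} and compactness of $\mathrm{Sep}(A:B)$, some separable state $\sigma_0$ satisfies $\onenorm{\cC^{\otimes 2}(\Phi)-\sigma_0}\leq1/4$.

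The second half produces the contradiction. By Lemma~\ref{lem:tr_dist. vs dimond}, $\onenorm{\cC^{\otimes 2}-\cI^{\otimes 2}}\leq2\sqrt{2\onenorm{\cC-\cI}}<2\sqrt{2/128}=1/4$, hence $\onenorm{\cC^{\otimes 2}(\Phi)-\Phi}<1/4$, and the triangle inequality gives $\onenorm{\Phi-\sigma_0}<1/2$. On the other hand, every separable two-qubit state $\sigma$ has singlet fraction $\bra{\Phi}\sigma\ket{\Phi}\leq1/2$ (the standard bound on the fidelity of a separable state with a maximally entangled state), so $\onenorm{\Phi-\sigma}\geq\Tr\br{\Phi\br{\Phi-\sigma}}=1-\bra{\Phi}\sigma\ket{\Phi}\geq1/2$ --- a contradiction. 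Therefore $\onenorm{\cD\circ\mathrm{i.i.d.}_\cN(C)\circ\cE-\cI}\geq1/128$, which is the assertion with $\epsilon_0=1/128$.

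The conceptual core is the two-copy reduction: it replaces ``preserving an unknown qubit'' (a quantity that is \emph{not} obviously monotone under the operations we allow, since fresh ancillas and perfect classical memory are available) by ``preserving entanglement across a bipartition along which the entire noisy circuit acts by separable channels,'' which is exactly the setting Lemma~\ref{lem:chiSEPcontraction} controls. I expect the only genuinely delicate point to be the bookkeeping of constants: the $1/16$ threshold in Lemma~\ref{lem:chiSEPcontraction}, the $\sqrt{2\epsilon}$ loss in Lemma~\ref{lem:tr_dist. vs dimond}, the gap~\eqref{eq:tr_dist vs chi2} between the $\chi^2$-divergence and the trace distance, the estimate $\chisep{A:B}(\Phi)\leq1$, and the separability radius $1/2$ of $\Phi$ must all fit together so that the single hypothesis $T\geq\br{2/p}^{2n}$ and the single universal constant $\epsilon_0=1/128$ suffice; there is essentially no slack, which is also why the maximally entangled state, rather than a generic pure entangled state, is the correct test input.
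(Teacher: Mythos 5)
Your proof is correct and follows essentially the same route as the paper's: two parallel copies of the circuit, iterated contraction of $\chisep{A:B}$ via Lemma~\ref{lem:chiSEPcontraction}, conversion to trace distance from $\mathrm{Sep}(A:B)$ via Eq.~\eqref{eq:tr_dist vs chi2}, the maximally entangled test state with separability radius $1/2$, and Lemma~\ref{lem:tr_dist. vs dimond} to pass from two copies back to one. The only (harmless, in fact slightly sharper) deviation is that you bound the initial value by $\chisep{A:B}(\Phi)\leq 1$ directly instead of invoking the generic $2^{2n}$ bound of Lemma~\ref{lem:expr_chisep}, and you phrase the endgame as a contradiction rather than the paper's direct lower bound $\onenorm{\cM^{\otimes 2}-\cI^{\otimes 2}}\geq \dsep{A:B}(\sigma)-\dsep{A:B}(\cM^{\otimes 2}(\sigma))$.
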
 

\begin{proof}
    Consider two parallel instances of the circuit $C$. By definition of the $\mathrm{i.i.d.}$ error model, we have $\mathrm{i.i.d.}_\cN\br{ C^{\otimes 2}}=\br{\mathrm{i.i.d.}_\cN\br{C}}^{\otimes 2}$. Note that $C^{\otimes 2}$ is a circuit of width $2n$ and length $T$, with $A$ being the registers of the first copy of $C$ and $B$ being the registers of the second copy. By Lemma~\ref{lem:expr_chisep} in the appendix, for any input state $\rho^0$, we have $\chisep{A:B}\br{\rho^0}\leq 2^{2n}$. Therefore, by Lemma~\ref{lem:chiSEPcontraction}, for $T\geq (2/p)^{2n}$, it is easy to see that $\chisep{A:B}\br{\br{\mathrm{i.i.d.}_\cN\br{ C}}^{\otimes 2}(\rho^0)}\leq 1/16$. In particular, for any encoding map $\cE$ and any 2-qubit state $\sigma$, we have $\chisep{A:B}\br{\br{\br{\mathrm{i.i.d.}_\cN\br{ C}}^{\otimes 2}\circ\cE^{\otimes 2}}(\sigma)}\leq 1/16$. For any decoding map $\cD$, since $\cD^{\otimes 2}$ is a separable channel, we have \begin{equation*}
        \chisep{A:B}\br{\br{\cD^{\otimes 2}\circ \br{\mathrm{i.i.d.}_\cN\br{ C}}^{\otimes 2}\circ\cE^{\otimes 2}}(\sigma)}\leq \chisep{A:B}\br{\br{\mathrm{i.i.d.}_\cN\br{ C}}^{\otimes 2}\circ\cE^{\otimes 2}}(\sigma)\leq \frac{1}{16}.
    \end{equation*}
    Let $\cM\coloneqq \cD \circ \mathrm{i.i.d.}_\cN\br{C} \circ \cE$. Then we have
    \begin{equation*}
        \onenorm{\cM^{\otimes 2}-\cI^{\otimes 2}}=  \sup_\sigma \onenorm{\cM^{\otimes 2}(\sigma)-\sigma} \geq \sup_\sigma \; \br{\dsep{A:B}(\sigma)-\dsep{A:B}(\cM^{\otimes 2}(\sigma))}\enspace,
    \end{equation*}
    where $\dsep{A:B}$ denotes the 1-norm distance from the set of separable states $\mathrm{Sep}\br{A:B}$ defined for a quantum state $\tau_{AB}$ as $\dsep{A:B}\br{\tau_{AB}} \coloneqq \min_{\zeta\in \mathrm{Sep}\br{A:B}} \onenorm{\tau-\zeta}$. For $\sigma$ equal to the maximally entangled state, it is straightforward to see that $\dsep{A:B}(\sigma)\geq 1/2$.  Moreover, by Eq.~\eqref{eq:tr_dist vs chi2}, we have $\dsep{A:B}\br{\cM^{\otimes 2}(\sigma)}\leq \sqrt{1/16}=1/4$, for any 2-qubit state $\sigma$. Hence, we have $\onenorm{\cM^{\otimes 2}-\cI^{\otimes 2}}\geq 1/4$. The statement of the theorem follows from Lemma~\ref{lem:tr_dist. vs dimond} for $\epsilon_0\geq1/128$.
\end{proof}
\setcounter{section}{3}

We should point out a connection between our proof of Theorem \ref{thm:no_long_memory} (including Lemma \ref{lem:UB on Dobrushin}) and the zero-error classical capacity of a quantum channel~\cite{medeiros2005quantum,beigi2007complexity}. The zero-error classical capacity $C_0(\cN)$ of a quantum channel $\cN$ arises as the regularization of the logarithm of the (quantum) \emph{independence number} $\alpha(\cN)$ which equals the maximal number of orthogonal pure states that are mapped to orthogonal outputs by the channel $\cN$ (see~\cite{duan2012zero} for details). It is clear from this definition that $\alpha(\cN)=1$ if and only if $\eta_{\text{tr}}(\cN)<1$. Qualitatively, Theorem \ref{thm:no_long_memory} says that a finite-size quantum memory affected by a non-unitary qubit noise channel $\cN$ cannot store information for arbitrary long times. Using our proof technique, this statement can be generalized beyond qubit noise channels as follows: If $\cN$ arises as a non-trivial convex combination with an entanglement breaking quantum channel or satisfies $\alpha(\cN^{\otimes k})=1$, for some $k\in\N$, then no quantum memory of width $k$ affected by noise $\cN$ can store information for arbitrary long times. In particular, if $C_0(\cN)=0$, then no finite width will allow for storing information for arbitrary long times. The zero-error classical capacity can be superactivated in the sense that there are quantum channels $\cN$ for which $\alpha(\cN)=1$ but $\alpha(\cN^{\otimes k})>1$ for some $k\in \N$ (see~\cite{duan2009super,cubitt2011superactivation}), and it would be interesting to study the maximal storage times of quantum memories for such examples. Our techniques in Lemma \ref{lem:UB on Dobrushin} can be used to show more generally that such a superactivation does not happen for channels of maximal Kraus rank. We remark that the impossibility of superactivation of the zero-error classical capacity has been shown previously in~\cite{park2012zero,shirokov2015superactivation}, for qubit channels. However, unlike Lemma \ref{lem:UB on Dobrushin} these results did not give any quantitative estimate on the trace-norm contraction coefficient and therefore they do not imply our main quantitative bound.

\section{Implications for fault-tolerant quantum computation} \label{sec:implications}

\subsection{A high-level definition of fault tolerance}

A \emph{quantum fault tolerance scheme} is a circuit encoding $FT$ which maps any ideal quantum circuit $C$ to a family of its \emph{fault-tolerant simulation} circuits $FT^\ell(C)\coloneqq FT\br{\ell,C}$. As defined formally below, the circuit encoding is designed to allow the implementation of the circuit $C$ even with faulty components.

Let $T=Length(C)$ denote the length of the circuit $C$. We denote by $n_t(C)$ the number of (logical) qubits in the circuit $C$ after $t\in\{0,1,\ldots,T\}$ time steps and use $n=Width(C)$ to denote the width of $C$, i.e., $n=\max_t n_t(C)$. Similarly, let $n'=Width(FT^\ell(C))$ be the width, i.e, the maximum number of (physical) qubits and $T'=Length\br{FT^\ell(C)}$ be the length of $FT^\ell(C)$. Note that it is important to distinguish between the classical and quantum subsystems, as in our circuit model only quantum subsystems are subject to the noise. For the input circuit $C$, we may assume without loss of generality that there are no classical inputs, as we can consider the circuit obtained by conditioning on the state of the classical input systems. 

\setcounter{section}{4}
\setcounter{theorem}{7}
\begin{definition} \label{def:FT}
    A circuit encoding $FT$ is a fault tolerance scheme against an error model $\sE$ if there exist a family of ideal encoding maps $\cbr{\cE^\ell_n}$ and decoding maps $\cbr{\cD^\ell_n}$ only depending on $FT$ such that for every $\epsilon\in(0,2]$ and every quantum circuit $C$, there exists $\ell_0$ such that
    \begin{equation} \label{eq:FT-condition}
        \onenorm{ \cD^\ell_{n_T(C)} \circ \sE\br{FT^\ell(C)} \circ \cE^\ell_{n_0(C)} - C} \leq \epsilon \,, \quad \forall \ell\geq \ell_0 \enspace.
    \end{equation}
\end{definition} 
\setcounter{section}{1}
\setcounter{theorem}{1}
In the above definition, if the input Hilbert space of $C$ is trivial (e.g., if $C$ is a circuit for classical computation), then for every $\ell$, the circuit $FT^\ell(C)$ is also naturally defined to have no input, i.e., $\cE^\ell_0$ is fixed to be the identity map on $\C$. 

Formal definitions of fault tolerance are indeed invaluable in studying general properties and limitations of fault tolerance schemes. We emphasize that Definition~\ref{def:FT} is far from capturing all the different aspects of quantum fault tolerance. Indeed, there are several different ways of defining fault tolerance, each with their own advantages and limitations. Here, we aim at providing a high-level definition with minimal assumptions which is easy to work with and applies to most known and potential schemes. Below we discuss some examples of known fault tolerance protocols and explain how they fit into our definition.

Perhaps the most well-known family of fault tolerance schemes are the schemes based on concatenation. These schemes typically involve a quantum error correcting code $Q$ together with fault-tolerant gadgets for the elementary locations, i.e., state preparation gadgets, gate gadgets for a universal set of gates, storage gadgets, and measurement gadgets, in addition to error correction gadgets. For a quantum circuit $C$, the circuit $FT(C)$ is obtained by replacing the locations in $C$ by the corresponding gadgets, interlaced with error correction gadgets on the quantum registers, which now correspond to qubits of $C$ encoded in $Q$. The gadgets are designed to satisfy some fault tolerance properties, such that they not only act correctly in the absence of faults, but also control the propagation of errors if the number of errors is not too high. For $\ell\geq 1$, let $FT^\ell(C)=FT\br{FT^{\ell-1}(C)}$, where we define $FT^0(C)=C$. For any such scheme, assuming the distance between the noise channel $\cN$ and the identity channel is below a threshold, the output state of $\mathrm{i.i.d.}_\cN\br{FT^\ell(C)}$ is arbitrarily close to an encoded version of the output state of $C$ for a sufficiently large $\ell$ \cite{AGP05}.

Fault tolerance schemes based on topological codes are another family of schemes which have a threshold noise level below which arbitrarily long quantum computations are possible~\cite{denis02,WANG03,Raussendorf07}. For the surface code~\cite{Kitaev03} (and more generally, low-density parity check codes), a single syndrome bit measurement only involves a few neighboring qubits. This property allows error correction to be performed more efficiently for these codes~\cite{terhal15,Brown16}. Moreover, topological properties of surface codes allow for simpler and more efficient implementation of fault-tolerant gadgets for elementary locations~\cite{denis02,Campbell17,Horsman12,Brown17}. This in turn results in a desirable scaling of the overhead and reduction of the logical error rate when the size of the underlying surface code is increased. In this case, the index $\ell$ in definition~\ref{def:FT} corresponds to the size of the surface code. 

\subsection{A lower bound on the space overhead for fault tolerance} 

For a fault-tolerant simulation circuit $FT^\ell(C)$, the ratios $n'/n$ and $T'/T$ are respectively called the space overhead and the time overhead of $FT^\ell(C)$. We view a fault tolerance scheme as a compiler which reads the instructions given by the input circuit $C$ and encodes them into the fault-tolerant simulation circuits $FT^\ell(C)$, without optimizing the circuit $C$. This should be true even if the input circuit involves unnecessary steps that are completely independent of the output of the circuit. While in Eq.~\eqref{eq:FT-condition}, we only require the circuit $FT^\ell(C)$ to (approximately) implement the \emph{overall} logical operation of the circuit $C$ on a code space, this ``faithfulness'' property is in fact necessary to have a meaningful notion of overhead for a fault tolerance scheme. As a result of this property, we may assume $n'\geq n$ and $T'\geq T$, for any input circuit $C$ and any fault-tolerant simulation circuit $FT^\ell(C)$. The faithfulness property also rules out circuit encodings which map quantum circuits for classical computation to fully classical (hence, noiseless) circuits.  

Our result of the previous section can be used to prove a lower bound on the space overhead of any universal fault tolerance scheme satisfying Definition~\ref{def:FT} and the natural requirement that the overhead is at least $1$ for any fault-tolerant simulation circuit. 

\setcounter{section}{1}
\setcounter{theorem}{1}

\begin{theorem}[\textbf{Restated}]
    Let $\cN$ be a non-unitary qubit channel. For any fault tolerance scheme against the $\mathrm{i.i.d.}_\cN$ noise model, the number of physical qubits is at least 
    \begin{equation*}
        \max\cbr{\mathrm{Q}(\cN)^{-1}n,\alpha_\cN \log T}\enspace,
    \end{equation*}
    for circuits of length $T$ and width $n$, and any constant accuracy $\epsilon\leq 1/128$. Here, $\mathrm{Q}(\cN)$ denotes the quantum capacity of $\cN$ and $\alpha_\cN=\frac{1}{2\log 2/p}$, where $p\in(0,1]$ is a constant only depending on the channel $\cN$. In particular, fault tolerance is not possible even for a single time step if $\cN$ has zero quantum capacity or the space overhead is strictly less than $\mathrm{Q}(\cN)^{-1}$.
\end{theorem}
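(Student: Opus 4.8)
The plan is to combine two lower bounds and take their maximum. The first bound, $n' \geq \mathrm{Q}(\cN)^{-1} n$, will come from the quantum capacity and the fact that a fault-tolerance scheme must in particular be able to implement a quantum memory; the second, $n' \geq \alpha_\cN \log T$, will come directly from Theorem~\ref{thm:no_long_memory}.

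For the capacity bound, I would first observe that for any $n$ and $T\geq 1$, the identity circuit $C = \cI_{2^n}^{\otimes T}$ on $n$ qubits, repeated for $T$ time steps, is a valid input circuit of width $n$ and length $T$. Applying Definition~\ref{def:FT} with error model $\sE = \mathrm{i.i.d.}_\cN$, for any $\epsilon \in (0,2]$ there is $\ell$ large enough that $\onenorm{\cD \circ \mathrm{i.i.d.}_\cN(FT^\ell(C)) \circ \cE - \cI_{2^n}} \leq \epsilon$, where $\cE, \cD$ are the ideal (perfect) encoding/decoding maps associated with the scheme, acting between $n$ logical qubits and $n'$ physical qubits. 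The composite channel $\mathrm{i.i.d.}_\cN(FT^\ell(C))$ is built out of $n'$-qubit layers interlaced with $\cN^{\otimes n'}$; however, for the capacity argument I would only use the single-time-step structure: since $n'\geq n$ and the scheme implements $n$ logical qubits reliably through at least one layer of $\cN^{\otimes n'}$ sandwiched between noiseless operations, I get (after regularizing over blocks, i.e., running $k$ copies of the width-$n$ memory in parallel, which is again a valid circuit of width $kn$) a coding scheme that sends $n$ qubits per $n'$ uses of $\cN$ with vanishing error, hence $n/n' \leq \mathrm{Q}(\cN)$, i.e., $n' \geq \mathrm{Q}(\cN)^{-1}n$. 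The main subtlety here is that a single time step of the noisy FT circuit is not literally $\cN^{\otimes n'}$ but rather has noiseless channels before and after the noise layer; this is exactly the setting of the quantum capacity (encoder and decoder are free), so it poses no real difficulty, but one must be careful to phrase it as: the FT scheme gives, for the memory circuit, an $(\cE', \cD')$ pair absorbing all the noiseless layers around a single $\cN^{\otimes n'}$, achieving rate $n/n'$; by Definition~\ref{def:Qcapacity} and regularization this forces $n/n' \leq \mathrm{Q}(\cN)$. If $\mathrm{Q}(\cN) = 0$ this immediately rules out fault tolerance even for $T=1$, $n=1$.

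For the logarithmic bound, let $C$ be any circuit of length $T$ and width $n$. By Definition~\ref{def:FT}, for accuracy $\epsilon \leq 1/128$ there is $\ell$ with $\onenorm{\cD^\ell_{n_T} \circ \mathrm{i.i.d.}_\cN(FT^\ell(C)) \circ \cE^\ell_{n_0} - C} \leq \epsilon \leq 1/128$. Now I would specialize to $C$ being (or containing) a one-qubit quantum memory run for $T$ steps: then $C = \cI$ (the qubit identity channel), $n_0 = n_T = 1$, and $FT^\ell(C)$ is a circuit of some width $n' = Width(FT^\ell(C))$ and length $T' = Length(FT^\ell(C)) \geq T$. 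Applying Theorem~\ref{thm:no_long_memory} to the circuit $FT^\ell(C)$ with its encoding/decoding maps $\cE^\ell_1, \cD^\ell_1$: if $T' \geq (2/p)^{2n'}$ then $\onenorm{\cD^\ell_1 \circ \mathrm{i.i.d.}_\cN(FT^\ell(C)) \circ \cE^\ell_1 - \cI} \geq \epsilon_0 \geq 1/128$, contradicting the fault-tolerance guarantee (since $1/128 > \epsilon$ would be false — here I need $\epsilon < \epsilon_0$, which holds as $\epsilon \leq 1/128 \leq \epsilon_0$; to get a strict contradiction I take $\epsilon$ strictly below $1/128$, or note $\epsilon_0$ can be taken slightly above $1/128$). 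Hence we must have $T' < (2/p)^{2n'}$, and since $T' \geq T$ this gives $T < (2/p)^{2n'}$, i.e., $\log T < 2n' \log(2/p)$, i.e., $n' > \frac{\log T}{2\log(2/p)} = \alpha_\cN \log T$. Combining with the capacity bound and noting the width of $FT^\ell(C)$ for a general width-$n$, length-$T$ circuit is at least that needed for the embedded memory, we get $n' \geq \max\{\mathrm{Q}(\cN)^{-1}n, \alpha_\cN \log T\}$.

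The main obstacle I anticipate is the bookkeeping around the quantum-capacity argument: making precise that the ideal encoding/decoding maps of the FT scheme, together with the noiseless layers of the FT circuit, can be repackaged into a legitimate $(\cE_n, \cD_n)$ pair for $\cN^{\otimes n'}$ in the sense of Definition~\ref{def:Qcapacity}, and that this survives the regularization limit defining the capacity. One clean way around this is to run many independent copies of the width-$n$, length-$T$ memory circuit side by side — this is still a single valid input circuit — so that the FT guarantee directly yields a sequence of codes of increasing blocklength with rate approaching $n/n'$ and vanishing error, which is exactly an achievable rate; thus $n/n' \leq \mathrm{Q}(\cN)$. The logarithmic bound, by contrast, is essentially immediate from Theorem~\ref{thm:no_long_memory} once one observes that $T' \geq T$ by the faithfulness/overhead-at-least-one convention, so no genuine difficulty arises there.
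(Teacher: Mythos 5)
Your overall strategy is the same as the paper's: the $\alpha_\cN\log T$ bound is obtained by applying Theorem~\ref{thm:no_long_memory} to the fault-tolerant simulation of a memory circuit together with $T'\geq T$, and the $\mathrm{Q}(\cN)^{-1}n$ bound by absorbing all noiseless layers of $\mathrm{i.i.d.}_\cN(FT^\ell(C))$ except a single $\cN^{\otimes n'}$ into an encoder/decoder pair and invoking the definition of quantum capacity. The logarithmic half of your argument is essentially identical to the paper's and is fine.

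There is, however, one genuine gap in the capacity half, and it is specific to this paper's model: because classical registers are noise-free and do not count towards the width, the ``encoder'' $\cE'$ you build (everything up to the isolated noise layer) can deposit measurement outcomes in a classical register that is handed \emph{directly} to the ``decoder'' $\cD'$ without passing through $\cN^{\otimes n'}$. The object you obtain is therefore not a code for $\cN^{\otimes n'}$ in the sense of Definition~\ref{def:Qcapacity}, but a code assisted by noiseless forward classical communication. To close the gap one must invoke the fact that forward classical communication does not increase the quantum capacity, i.e.\ $\mathrm{Q}_{\to}(\cN)=\mathrm{Q}(\cN)$, which the paper does explicitly by citing \cite{BKN2000,BDSW96}. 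A second, smaller issue: running $k$ copies of the width-$n$ memory side by side as a single input circuit does not obviously yield a code of rate $n/n'$, since the scheme applied to the width-$kn$ circuit may use some width $n'_k\neq kn'$; the regularization should instead be carried out by tensoring the single derived code for $\cN^{\otimes n'}$ with itself (the paper is equally terse on this point, so I would not count it heavily against you, but the classical side-channel does need to be addressed).
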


\begin{proof}
    Let $C$ be a quantum circuit of length $T$ on $n$ qubits in which in every time step an identity gate is applied on every qubit. Consider a fault tolerance scheme $FT$ satisfying Definition~\ref{def:FT}. For any constant $\epsilon\in (0,1/128)$, let $C'(\epsilon)$ be a fault-tolerant encoding of $C$ given by $FT$ satisfying Eq.~\eqref{eq:FT-condition}, for a pair of encoding and decoding maps $\br{\cE_\epsilon,\cD_\epsilon}$. Then by Theorem~\ref{thm:no_long_memory}, the width of $C'(\epsilon)$ is bounded in terms of its depth as $n'\geq \frac{\log T'}{2\log 2/p}$. Since $T'\geq T$ the lower bound of $\alpha_\cN \log T$ follows for $\alpha_\cN=\frac{1}{2\log 2/p}$. 
    
    For the second lower bound, consider the quantum channel obtained by removing the last layer of noise from $\mathrm{i.i.d.}_\cN\br{C'(\epsilon)}$ and let $\cE'_\epsilon$ denote the composition of $\cE_\epsilon$ with this channel. Then for the family of encoding and decoding maps $\br{\cE'_\epsilon,\cD_\epsilon}$, the overall channel $\cT_\epsilon=\cD_\epsilon\circ\cN^{\otimes n_\epsilon}\circ \cE'_\epsilon$ approximates the qubit identity channel up to arbitrarily high accuracy. But continuity of the quantum capacity~\cite{Leung09} implies that, for sufficiently small $\epsilon$, the channel $\cT_\epsilon$ (and hence $\cN$) has a non-zero quantum capacity. Moreover, the existence of the family of encoding and decoding maps $\br{\cE'_\epsilon,\cD_\epsilon}$ implies that, for the channel $\cN$, the inverse of the space overhead is an $\epsilon$-achievable communication rate for any arbitrarily small constant $\epsilon$ (See Definition~\ref{def:Qcapacity}). Therefore, by the definition of quantum capacity, we have $n'\geq \frac{1}{\mathrm{Q}(\cN)}n$. Since assistance by noiseless forward classical communication does not increase the quantum capacity of a channels~\cite{BKN2000,BDSW96}, the lower bound of $\mathrm{Q}(\cN)^{-1}n$ is still valid in our circuit model with free and noiseless classical computation.
    Finally, note that the above argument is valid even if there is a single time step of noisy operations. Therefore, fault tolerance is not achievable even for a single time step if $\cN$ has zero quantum capacity, or with an space overhead strictly less than $1/\mathrm{Q}(\cN)$.
\end{proof}
\setcounter{section}{4}
\setcounter{theorem}{8}

\bibliography{biblio.bib}

\appendix
\section{Appendix}

\begin{lemma} \label{lem:expr_chisep}
    Let $\rho_{XYAB}$ be a cc-qq state of the form $\rho_{XYAB} = \sum_{x,y} p_{xy} \proj{x}{x}_{X} \otimes \proj{y}{y}_{Y} \otimes \rho_{AB}^{xy}$. Then, we have
    \begin{align}
    \label{eq:bound_chi2_sep}
        \chisep{XA:BY}(\rho_{XYAB}) \leq d_{A} d_{B} - 1 \ .
    \end{align}
    In addition, we can express the $\chi^2$ divergence to separable states $\mathrm{Sep}\br{XA:BY}$ as
    \begin{align}
    \label{eq:expr_chi2_sep}
        \chisep{XA:BY}(\rho_{XYAB}) =  \left(\sum_{x,y} p_{xy} \sqrt{\chisep{A:B}(\rho^{xy}_{AB}) + 1} \right)^2 - 1 \ .
    \end{align}
\end{lemma}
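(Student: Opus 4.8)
The plan is to establish the identity \eqref{eq:expr_chi2_sep} by proving matching upper and lower bounds on $\chisep{XA:BY}(\rho_{XYAB})$, and then to read off \eqref{eq:bound_chi2_sep} as a one-line consequence. Throughout I would use the formula $\chi^2(\rho,\sigma)=\Tr\bigl(\rho\sigma^{-1/2}\rho\sigma^{-1/2}\bigr)-1$, monotonicity of $\chi^2$ under quantum channels, and the fact that local operations preserve separability.

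\emph{Upper bound.} First I would write down an explicit family of separable competitors: for any probability distribution $\{q_{xy}\}$ and any states $\tilde\sigma^{xy}_{AB}\in\mathrm{Sep}(A:B)$, expanding each $\tilde\sigma^{xy}_{AB}=\sum_k\mu^{xy}_k\,\alpha^{xy,k}_A\otimes\beta^{xy,k}_B$ and absorbing $\proj{x}{x}_{X}$ into the $A$-part and $\proj{y}{y}_{Y}$ into the $B$-part shows that
\[
\sigma=\sum_{x,y}q_{xy}\,\proj{x}{x}_{X}\otimes\proj{y}{y}_{Y}\otimes\tilde\sigma^{xy}_{AB}\in\mathrm{Sep}(XA:BY).
\]
Since $\rho_{XYAB}$ and $\sigma$ are both block-diagonal in the $XY$ computational basis, computing $\Tr\bigl(\rho\,\sigma^{-1/2}\rho\,\sigma^{-1/2}\bigr)$ block by block gives
\[
\chi^2(\rho_{XYAB},\sigma)+1=\sum_{x,y}\frac{p_{xy}^2}{q_{xy}}\bigl(\chi^2(\rho^{xy}_{AB},\tilde\sigma^{xy}_{AB})+1\bigr).
\]
Minimizing the right-hand side first over $\tilde\sigma^{xy}_{AB}\in\mathrm{Sep}(A:B)$, which replaces each term by $\chisep{A:B}(\rho^{xy}_{AB})$, and then over $\{q_{xy}\}$ by Cauchy--Schwarz (optimal choice $q_{xy}\propto p_{xy}\sqrt{\chisep{A:B}(\rho^{xy}_{AB})+1}$), I obtain $\chisep{XA:BY}(\rho_{XYAB})+1\le\bigl(\sum_{x,y}p_{xy}\sqrt{\chisep{A:B}(\rho^{xy}_{AB})+1}\bigr)^2$.

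\emph{Lower bound.} For the reverse inequality I would use a pinching argument. Let $\mathcal{P}$ be the channel that dephases $X$ and $Y$ in the computational basis; it is the tensor product of a channel on $XA$ and a channel on $BY$, hence a separable channel, and it fixes $\rho_{XYAB}$. Let $\sigma$ be an arbitrary state in $\mathrm{Sep}(XA:BY)$; we may assume $\chi^2(\rho_{XYAB},\sigma)<\infty$. Monotonicity of $\chi^2$ under $\mathcal{P}$ gives $\chi^2(\rho_{XYAB},\mathcal{P}(\sigma))=\chi^2(\mathcal{P}(\rho_{XYAB}),\mathcal{P}(\sigma))\le\chi^2(\rho_{XYAB},\sigma)$, and $\mathcal{P}(\sigma)=\sum_{x,y}q_{xy}\,\proj{x}{x}_{X}\otimes\proj{y}{y}_{Y}\otimes\xi^{xy}_{AB}$ is still separable across $XA:BY$. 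Applying the local projection $\proj{x}{x}_{X}\otimes\proj{y}{y}_{Y}$ and tracing out $X$ and $Y$ shows each $\xi^{xy}_{AB}\in\mathrm{Sep}(A:B)$, so $\mathcal{P}(\sigma)$ is one of the competitors above; the same block computation together with $\chi^2(\rho^{xy}_{AB},\xi^{xy}_{AB})\ge\chisep{A:B}(\rho^{xy}_{AB})$ and Cauchy--Schwarz yields $\chi^2(\rho_{XYAB},\mathcal{P}(\sigma))+1\ge\bigl(\sum_{x,y}p_{xy}\sqrt{\chisep{A:B}(\rho^{xy}_{AB})+1}\bigr)^2$. Since $\sigma$ was arbitrary, taking the infimum over $\mathrm{Sep}(XA:BY)$ and combining with the upper bound proves \eqref{eq:expr_chi2_sep}.

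\emph{Consequence and main obstacle.} For \eqref{eq:bound_chi2_sep}, note that for any bipartite state $\tau_{AB}$ the separable choice $\id_{AB}/(d_Ad_B)$ gives $\chi^2\bigl(\tau_{AB},\id_{AB}/(d_Ad_B)\bigr)+1=d_Ad_B\Tr(\tau_{AB}^2)\le d_Ad_B$, hence $\chisep{A:B}(\rho^{xy}_{AB})+1\le d_Ad_B$; substituting into \eqref{eq:expr_chi2_sep} and using $\sum_{x,y}p_{xy}=1$ gives $\chisep{XA:BY}(\rho_{XYAB})\le d_Ad_B-1$. The only genuinely delicate point in the argument is the separability bookkeeping in the lower bound: one must check both that the pinching map is a separable channel (so that $\mathcal{P}(\sigma)$ stays in $\mathrm{Sep}(XA:BY)$) and that each block $\xi^{xy}_{AB}$ is separable across $A:B$ — it is exactly the latter that licenses replacing $\chi^2(\rho^{xy}_{AB},\xi^{xy}_{AB})$ by $\chisep{A:B}(\rho^{xy}_{AB})$, since $\chisep{A:B}$ minimizes only over separable states. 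Everything else is a routine block-diagonal $\chi^2$ computation and a Cauchy--Schwarz optimization over the weights $\{q_{xy}\}$.
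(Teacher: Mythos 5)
Your proof is correct and follows essentially the same route as the paper's: both reduce the minimization over $\mathrm{Sep}(XA{:}BY)$ to block-diagonal separable competitors via data processing under the $XY$-pinching, perform the same block-wise $\chi^2$ computation, and optimize the weights $q_{xy}$ (your Cauchy--Schwarz step is equivalent to the paper's convexity-of-$z\mapsto z^{-1}$ argument). The only cosmetic differences are that the paper proves \eqref{eq:bound_chi2_sep} directly with the competitor $\sum_{x,y}p_{xy}\proj{x}{x}\otimes\proj{y}{y}\otimes\id_{AB}/(d_Ad_B)$ rather than deducing it from \eqref{eq:expr_chi2_sep}, and that you spell out the separability bookkeeping for the pinched competitor in more detail than the paper does.
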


\begin{proof}
    Consider the separable state $\sigma_{XYAB} = \sum_{x,y} p_{xy} \proj{x}{x}_{X} \otimes \proj{y}{y}_{Y} \otimes \frac{\id_{AB}}{d_A d_B}$. We get $\chi^2(\rho, \sigma) = \sum_{x,y} p_{xy} d_{A} d_{B} \Tr\br{ \br{\rho^{xy}}^2 } - 1 \leq d_A d_B - 1$. This shows~\eqref{eq:bound_chi2_sep}.

    To show Eq.~\eqref{eq:expr_chi2_sep}, by the data processing inequality for the $\chi^2$ divergence, we have that 
    \begin{align*}
        \chisep{XA:BY}(\rho_{XYAB}) = \min_{\substack{q_{xy} \\ \sigma^{xy} \in \mathrm{Sep}(A:B)}} \chi^2\left(\rho_{XYAB}, \sum_{x,y} q_{xy} \proj{x}{x}_{X} \otimes \proj{y}{y}_{Y} \otimes \sigma^{xy}_{AB} \right) \ ,
    \end{align*}
    where $q_{xy}$ is a probability distribution. Note that for any $x,y$ for which $q_{xy} > 0$, $\sigma^{xy}_{AB}$ has to be separable if $\sum_{x,y} q_{xy} \proj{x}{x} \otimes \proj{y}{y} \otimes \sigma^{xy}_{AB}$ is separable. We can write
    \begin{align*}
       \min_{\substack{q_{xy} \\ \sigma^{xy} \in \mathrm{Sep}(A:B)}} &\chi^2\!\br{\rho_{XYAB}, \sum_{x,y} q_{xy} \proj{x}{x}_{X} \otimes \proj{y}{y}_{Y} \otimes \sigma_{AB}^{xy}} \\
      &= \br{\min_{q_{xy}} \sum_{x,y} p_{xy}^2 q_{xy}^{-1} \min_{\sigma^{xy}_{AB} \in \mathrm{Sep}(A:B)} \Tr\br{ \rho^{xy} \br{\sigma^{xy}}^{-1/2} \rho^{xy} \br{\sigma^{xy}}^{-1/2}} } - 1 \\
        &= \left(\min_{q_{xy}} \sum_{x,y} p_{xy}^2 q_{xy}^{-1} \br{\chi^2_{\mathrm{Sep}}(\rho^{xy}_{AB}) + 1} \right) - 1 \ .
    \end{align*}
    By convexity of the function $z \mapsto z^{-1}$ and introducing $Z = \sum_{x,y} \sqrt{p_{xy}^2 \br{\chisep{A:B}(\rho^{xy}_{AB}) + 1}}$, we have
    \begin{align*}
       \sum_{x,y} \frac{\sqrt{p_{xy}^2 \br{\chisep{A:B}(\rho^{xy}_{AB}) + 1}}}{Z} \frac{\sqrt{p_{xy}^2 \br{\chisep{A:B}(\rho^{xy}_{AB}) + 1}}}{q_{xy}} 
       &\geq \frac{1}{\sum_{x,y} \frac{1}{Z} q_{xy}}=  Z \ .
    \end{align*}
    We also have equality by choosing $q_{xy} = \frac{\sqrt{p_{xy}^2 \br{\chisep{A:B}(\rho^{xy}_{AB})+1}}}{Z}$ and we obtain
    \begin{align*}
       \chisep{XA:BY}(\rho_{XYAB}) = Z^2 - 1\enspace.
    \end{align*}
    
\end{proof}

\begin{lemma}\label{lem:Chi2ContractionFromEta}
Let $\rho_{X Y A_1 B_1}$ be a cc-qq state of the form given in Lemma~\ref{lem:expr_chisep}. Then, for any separable quantum channel $\cT\in \mathrm{SepC}\br{(A_1,A_2):(B_1,B_2)}$, we have
\[
\chisep{XA_2:B_2Y}((\cI_{XY}\otimes \cT)(\rho_{XYA_1B_1})) \leq \sqrt{\eta_{\chi}(\cT)}\chisep{XA_1:B_1Y}(\rho_{XYA_1B_1}).
\]
\end{lemma}

\begin{proof}
By Lemma \ref{lem:expr_chisep} we have
\begin{align*}
\chisep{XA_2:B_2Y}((\cI_{XY}\otimes \cT)(\rho_{XYA_1B_1})) &= \left(\sum_{x,y}p_{xy}\sqrt{\chisep{A_2:B_2}(\cT(\rho^{xy}_{A_1B_1}))+1}\right)^2 - 1 \\
&\leq \left(\sum_{x,y}p_{xy}\sqrt{\eta_{\chi}(\cT)\chisep{A_1:B_1}(\rho^{xy}_{A_1B_1})+1}\right)^2 - 1
\end{align*}
The final expression can be written as
\begin{align*}
&\lb\lbr \sum_{x,y}p_{xy}\sqrt{\eta_{\chi}(\cT)\chisep{A_1:B_1}(\rho^{xy}_{A_1B_1})+1}\rbr - 1\rb\lb\lbr\sum_{x,y}p_{xy}\sqrt{\eta_{\chi}(\cT)\chisep{A_1:B_1}(\rho^{xy}_{A_1B_1})+1}\rbr + 1\rb \\
&=\lb \sum_{x,y}p_{xy}\lbr\sqrt{\eta_{\chi}(\cT)\chisep{A_1:B_1}(\rho^{xy}_{A_1B_1})+1}-1\rbr\rb\lb\sum_{x,y}p_{xy}\lbr\sqrt{\eta_{\chi}(\cT)\chisep{A_1:B_1}(\rho^{xy}_{A_1B_1})+1}+1\rbr\rb \\
&\leq \sqrt{\eta_{\chi}(\cT)}\lb\sum_{x,y}p_{xy}\lbr\sqrt{\chisep{A_1:B_1}(\rho^{xy}_{A_1B_1})+1}-1\rbr\rb\lb\sum_{x,y}p_{xy}\lbr\sqrt{\chisep{A_1:B_1}(\rho^{xy}_{A_1B_1})+1}+1\rbr\rb \\
&=\sqrt{\eta_{\chi}(\cT)} \lb\lb\sum_{x,y}p_{xy}\sqrt{\chisep{A_1:B_1}(\rho^{xy}_{A_1B_1})+1}\rb^2 - 1\rb \\
&=\sqrt{\eta_{\chi}(\cT)}\chisep{XA_1:B_1Y}(\rho_{XYA_1B_1}),
\end{align*}
where we used that $\eta_{\chi}(\cT)\leq 1$ and the elementary inequality $\sqrt{at+1}-1\leq \sqrt{a}\lb \sqrt{t+1}-1\rb$, which is valid for $a\in \left[ 0,1\right]$ and any $t\geq 0$.
\end{proof}

\end{document}